\documentclass[journal,twoside,web]{ieeecolor}
\usepackage{generic}
\usepackage{cite}
\usepackage{textcomp}

\usepackage{bm}

\usepackage{amssymb,latexsym,amsfonts,amsmath,bbm}
\usepackage[cal=cm,bb=pazo]{mathalfa}
\usepackage{mathrsfs}
\usepackage{graphicx}
\usepackage{paralist}
\usepackage{dsfont}
\usepackage{eso-pic}
\usepackage{epsfig}
\usepackage{mathtools}
\usepackage{epstopdf}
\usepackage{lipsum}
\usepackage{cite}
\usepackage{subcaption}
\usepackage{booktabs}
\usepackage{multirow}
\newcommand*{\myalign}[2]{\multicolumn{1}{#1}{#2}}

\usepackage{makecell} 

\usepackage{array}
\newcolumntype{C}[1]{>{\centering\arraybackslash}m{#1}}

\newcommand{\PP}{\mathds{P}}

\usepackage[nocomma]{optidef}

\usepackage{tikz}
\usetikzlibrary{calc,shapes,arrows}

\usepackage{multirow}

\newcommand\thetaTildeI{\left[\begin{array}{c} \mathds I_{n} \\\Phi^\top \end{array}\right]}

\usepackage{algorithm}
\usepackage{algorithmic}

\newcommand{\redsquare}{\tikz\fill[red!70!white] (0,0) rectangle (2mm,2mm);}
\newcommand{\bluesquare}{\tikz\fill[blue!30!white] (0,0) rectangle (2mm,2mm);}
\newcommand{\greensquare}{\tikz\fill[green!30!white] (0,0) rectangle (2mm,2mm);}
\newcommand{\pinksquare}{\tikz\fill[pink!80!white] (0,0) rectangle (2mm,2mm);}

\usepackage{multirow}

\usepackage{tikz}
\usetikzlibrary{calc,shapes,arrows,arrows.meta, positioning}

\newcounter{theorem}
\newcounter{definition}
\newcounter{lemma}
\newcounter{claim}
\newcounter{problem}
\newcounter{proposition}
\newcounter{corollary}
\newcounter{construction}
\newcounter{example}
\newcounter{xca}
\newcounter{comments}
\newcounter{remark}
\newcounter{assumption}

\usepackage{stackengine}
\usepackage{scalerel}

\newcommand{\MATLAB}{\textsc{Matlab}\xspace}


\newcommand\Xright{\overrightarrow{\mathbb{X}}}

\newtheorem{theorem}[theorem]{Theorem}
\newtheorem{lemma}[lemma]{Lemma}

\newtheorem{problem}[problem]{Problem}
\newtheorem{proposition}[proposition]{Proposition}

\newtheorem{definition}[definition]{Definition}

\newtheorem{remark}[remark]{Remark}
\newtheorem{assumption}[assumption]{Assumption}


\numberwithin{equation}{section}

\DeclareFontFamily{U}{stix2bb}{}
\DeclareFontShape{U}{stix2bb}{m}{n} {<-> stix2-mathbb}{}

\usepackage[many]{tcolorbox}
\usetikzlibrary{calc}
\tcbuselibrary{skins}

\newtcolorbox{resp}[1][]{%
	enhanced jigsaw,%
	colback=gray!2!white,%
	colframe=gray!80!black,%
	size=small,%
	boxrule=1pt,%
	halign title=flush center,%
	coltitle=black,%
	breakable,%
	drop shadow=black!50!white,%
	attach boxed title to top left={xshift=1cm,yshift=-\tcboxedtitleheight/2,yshifttext=-\tcboxedtitleheight/2},%
	minipage boxed title=3cm,%
	boxed title style={%
		colback=white,%
		size=fbox,%
		boxrule=1pt,%
		boxsep=2pt,%
		underlay={%
			\coordinate (dotA) at ($(interior.west) + (-0.5pt,0)$);
			\coordinate (dotB) at ($(interior.east) + (0.5pt,0)$);
			\begin{scope}[gray!80!black]
				\fill (dotA) circle (2pt);
				\fill (dotB) circle (2pt);
			\end{scope}
		}%
	},%
	#1%
}

\usepackage{xspace}

\newcommand{\R}{{\mathbb{R}}}

\newcommand{\N}{{\mathbb{N}}}

\def\BibTeX{{\rm B\kern-.05em{\sc i\kern-.025em b}\kern-.08em
		T\kern-.1667em\lower.7ex\hbox{E}\kern-.125emX}}

\usepackage{etoolbox}

\makeatletter
\AtBeginDocument{%
	\pagestyle{headings}
	
	\patchcmd{\@oddhead}{\\[-19pt]}{\\[-8pt]}{}{}%
	\patchcmd{\@evenhead}{\\[-19pt]}{\\[-8pt]}{}{}%
}
\makeatother

\markboth{}
{Data-Driven Stochastic Control: Foundations and Guarantees}

\definecolor{blue(ryb)}{rgb}{0.01, 0.28, 1.0}
\definecolor{fashionfuchsia}{rgb}{0.96, 0.0, 0.63}
\makeatletter
\let\NAT@parse\undefined
\makeatother
\usepackage[colorlinks=true, citecolor=fashionfuchsia, linkcolor=blue(ryb), final]{hyperref}

\counterwithout*{equation}{section}

\makeatletter
\def\@opargbegintheorem#1#2#3{\textit{#1\ #2} \textit{(#3):}}
\makeatother

\begin{document}
	
\title{Data-Driven Stochastic Control: Foundations and Guarantees}
 \author{Abolfazl Lavaei, \IEEEmembership{Senior Member,~IEEE}
 	\thanks{The author is with the School of Computing, Newcastle University, NE4 5TG Newcastle Upon Tyne, United Kingdom (e-mail: {\tt\small{abolfazl.lavaei@newcastle.ac.uk}.}}
}

\maketitle
\begin{abstract}
This work establishes a step forward in advancing data-driven trajectory-based methods for \emph{stochastic} systems with {unknown} mathematical dynamics. In contrast to scenario-based approaches that rely on independent and identically distributed (i.i.d.) trajectories, this work develops a data-driven framework where each trajectory is gathered over a finite horizon and exhibits temporal dependence, referred to as a \emph{non-i.i.d.} trajectory. To ensure safety of dynamical systems using such trajectories, the current body of literature primarily considers dynamics subject to {unknown-but-bounded} disturbances, which facilitates robust analysis. While promising, such bounds may be violated in practice and the resulting \emph{worst-case} robust analysis tends to be overly conservative. To overcome these key challenges, this paper considers stochastic systems with unknown mathematical dynamics, influenced by process noise with \emph{arbitrary} distributions. In the proposed framework, data is collected from stochastic systems under \emph{multiple} realizations within a finite-horizon experiment, where each realization generates a non-i.i.d. trajectory. Leveraging the concept of stochastic control barrier certificates constructed from data, this work quantifies probabilistic safety guarantees with a certified confidence level. To achieve this, the proposed conditions are formulated as a sum-of-squares (SOS) optimization problem, relying solely on empirical average of the collected trajectories and statistical features of the process noise. The efficacy of the approach has been validated on three stochastic benchmarks with unknown models and arbitrary noise distributions. In one case study, it is shown that while no safety controller exists for the robust analysis of the system under bounded disturbances, the proposed stochastic framework yields a safety controller together with quantified probabilistic safety guarantees.
\end{abstract}

\begin{IEEEkeywords}
Data-driven stochastic control, non-i.i.d. trajectory-based approaches, stochastic control barrier certificates, formal methods
\end{IEEEkeywords}

\section{Introduction}\label{Sec: Introduction}

\IEEEPARstart{M}{odern} engineered systems are growing in complexity, with distributed physical systems increasingly integrated with computational elements, often operating under uncertainty. Examples span domains such as autonomous robotics, intelligent transportation networks, and healthcare infrastructure, where such systems are naturally modeled as stochastic and play a vital role in many sectors, particularly those where safety is paramount. Malfunctions in these contexts can lead to serious outcomes, from endangering human lives to incurring substantial economic costs \cite{mcgregor2017analysis}.

Modeling stochastic control systems with high fidelity requires capturing the details of each component. However, such systems often involve noisy processes and tightly integrated nonlinear components, leading to rapidly increasing complexity and making accurate modeling highly challenging. Even when such models can be constructed, their complex structure often hinder their practical use in verification and synthesis, limiting the effectiveness of traditional model-based approaches~\cite{hou2013model}. A notable case in point is the self-driving car industry, where most companies rely on \emph{simulations or physical testing} rather than detailed mathematical representations of vehicle dynamics. 

While precise models are rarely available, the widespread use of affordable sensors has enabled extensive data collection from system behavior, such as human driving data for vehicles. This wealth of behavioral data opens the door to systematic analysis and the design of controllers, \emph{i.e.,} the decision-making software that governs autonomous operation. These developments underscore the increasing need for \emph{data-driven} methods capable of handling systems with unknown dynamics, where insights should be derived from observed input-output interactions.

{\bf Data-Driven Literature.} To address the challenge posed by unknown dynamical systems, the literature has introduced two primary categories of data-driven methods: \emph{indirect} and \emph{direct} approaches~\cite{dorfler2022bridging,martin2023guarantees,markovsky2021behavioral}. Indirect methods, including model-based reinforcement learning~\cite{moerland2023model} and Gaussian process regression~\cite{seeger2004gaussian,lederer2019uniform}, aim to perform \emph{system identification}, constructing approximate models of unknown systems~\cite{hou2013model}. These techniques effectively reconstruct the system’s dynamics from data, enabling the application of classical model-based control once a sufficiently accurate model is obtained. However, their practical utility often hinges on the accuracy and tractability of the identification step. In fact, most system identification tools are designed for linear systems or specific nonlinear classes, limiting their applicability to more complex or highly nonlinear systems \cite{hou2013model}. In such settings, the identification process can be both computationally intensive and time-consuming. Additionally, indirect approaches are inherently two-stage~\cite{dorfler2022bridging}, requiring an initial modeling phase followed by analysis or controller synthesis, further adding to their complexity.

In contrast, direct data-driven approaches, such as model-free reinforcement learning~\cite{ramirez2022model}, iterative feedback tuning~\cite{hjalmarsson1998iterative}, virtual reference feedback tuning~\cite{campi2002virtual},  scenario approach~\cite{calafiore2006scenario,campi2009scenario}, and the trajectory-based approach~\cite{de2019formulas,van2020noisy,ou2025stochastic}, adopt a fundamentally different strategy. Rather than first identifying a model, these methods operate {directly} on system measurements, leveraging data to design controllers or perform analysis without explicitly constructing a mathematical representation of the system. This allows control policies to be learned purely from data, which is especially advantageous when model identification is infeasible due to complexity~\cite{tanaskovic2017data}.  Nonetheless, they also introduce key challenges, including the inherent difficulty in providing formal {performance guarantees} in the absence of a model-based framework.

In the last two decades, the field of direct data-driven optimization has witnessed significant advances, leading to the emergence of promising methods for tackling complex control systems. Among these, the \emph{scenario approach} introduced in~\cite{calafiore2006scenario} stands out as a powerful framework for robust control analysis. This method operates by first solving the problem using data and then mapping the outcome back to the original robust system using intermediate formulations that incorporate chance constraints~\cite{esfahani2014performance,margellos2014road,samari2024data,zaker2025data}. While promising, it imposes a critical assumption: the data must be independent and identically distributed (i.i.d.). This means each sample should come from a separate, independent input-out trajectory~\cite{calafiore2006scenario}, requiring the collection of multiple trajectories, potentially millions in practical scenarios. Consequently, this approach is compatible with {simulator-based} settings, where generating such independent runs is practically achievable.

An alternative and complementary method to the scenario approach is the \emph{non-i.i.d.} trajectory-based approach which relies on a single input-output trajectory collected over a fixed time horizon from the unknown system \cite{de2019formulas,berberich2020data,van2020noisy,ou2025stochastic,monshizadeh2024meta,zaker2025data,ashoori2025physics}. It builds on the notion of \emph{persistent excitation}, where the collected data should satisfy a rank condition for certain system classes, as formalized in (generalization of) Willems’ fundamental lemma~\cite{willems2005note}. If this condition is met, the trajectory is considered persistently excited, meaning it captures rich information about the system’s dynamics to enable rigorous analysis. This makes the trajectory-based approach particularly appealing in practical settings where generating many independent trajectories, as required by the scenario approach, is costly or infeasible. To provide a fair perspective, it should be acknowledged that while the scenario approach demands i.i.d. data, it is applicable to ensure safety for the \emph{general class} of nonlinear systems which are Lipschitz continuous. On the other hand, trajectory-based methods that depend on persistence of excitation are mainly limited to specific system classes, such as nonlinear systems with polynomial structure.

{\bf Core Contributions.} While trajectory-based methods have demonstrated significant potential, the existing literature predominantly addresses systems with unknown dynamics subject to {unknown-but-bounded} disturbances. Inspired by the matrix S-lemma proposed in~\cite{de2019formulas,van2020noisy} for deterministic systems, this work develops a foundational trajectory-based framework for \emph{stochastic} control systems with nonlinear polynomial dynamics, influenced by process noise with arbitrary distributions, marking a key departure from conventional assumptions. 

Within this framework, data is collected from multiple realizations of a stochastic system over a finite experimental horizon. This collected data is then used to establish {probabilistic} safety guarantees through the lens of stochastic control barrier certificates (S-CBC), without requiring any explicit system model. To accomplish this, the proposed conditions are cast as a sum-of-squares (SOS) optimization problem, relying solely on empirical average of trajectory data and statistical properties of the process noise. The proposed method has been validated on  physical stochastic systems with unknown dynamics, demonstrating its effectiveness in offering probabilistic safety guarantees directly from noisy data.

As a related work,~\cite{martin2023gaussian} has also considered nonlinear data-driven control under stochastic noise using both Frequentist and Bayesian formulations together with SOS-based synthesis. Unlike this \emph{indirect} framework, which first constructs a Taylor-based model representation and mainly focuses on robust stabilization, the present work develops a direct trajectory-based approach for stochastic safety synthesis via S-CBC. Moreover, the proposed framework only relies on upper bounds on the first two moments of the process noise and is therefore applicable beyond Gaussian noise settings considered in~\cite{martin2023gaussian}. Nevertheless, the Bayesian treatment in~\cite{martin2023gaussian} can provide a more structured characterization of uncertainty when Gaussian prior/posterior assumptions are appropriate.

\textbf{Literature on (Control) Barrier Certificates.} To ensure the safety of dynamical systems, the literature introduced the concept of control barrier certificates (CBC), which offer formal safety guarantees for complex systems with continuous state spaces (see \emph{e.g.,}~\cite{prajna2004safety, prajna2007framework, wieland2007constructive, ames2019control, xiao2023safe}). Conceptually, a CBC is a Lyapunov-like function: it defines inequality constraints over the state space whose values and temporal evolution are governed by the system’s dynamics. If an appropriate level set of a CBC can separate unsafe regions from all admissible trajectories initiated from a given initial set, the existence of such a function certifies the system’s (possibly probabilistic) safety (see \emph{e.g.,}~\cite{ames2016control, clark2019control,santoyo2021barrier,NEURIPS2020_barrier, lavaei2024scalable,nejati2024context,zaker2024compositional}).

However, most existing CBC-based approaches depend critically on the availability of an accurate mathematical model of the system dynamics. While recent research has made notable progress in developing trajectory-based methods for safety analysis without requiring an explicit model, these methods either neglect external disturbances entirely (see \emph{e.g.,}~\cite{nejati2022data,samari2024single,akbarzadeh2025formal}) or focus on bounded disturbances within a deterministic robust framework (see \emph{e.g.,}~\cite{bisoffi2022controller,BISOFFI20203953,LUPPI2024100914,akbarzadeh2024learning}). In contrast, the present work introduces a trajectory-based framework for analyzing the {probabilistic} safety of stochastic control systems subject to process noise with arbitrary distributions.

\textbf{Organization.} 
The remainder of the paper is structured as follows. Section \ref{Sec: Prob_Desc} introduces the notations and mathematical preliminaries, together with the definition of stochastic nonlinear polynomial systems. In the same section, the definitions of stochastic control barrier certificates and the associated safety guarantees are presented. In Section \ref{Physics-Guided}, the data-driven approach is proposed for synthesizing S-CBC and its associated safety controller solely based on noisy data. Section \ref{Diss} is dedicated to several discussions on the proposed approach, including a comparison between robust and stochastic analysis, which highlights the conservatism of robust methods in the existing literature compared to the probabilistic results presented in this work. Finally, simulation results are presented in Section \ref{Section: Simulation}, followed by the conclusion in Section \ref{Section: Conclusion}.

\section{Problem description}\label{Sec: Prob_Desc} 

\subsection{Notation}
We denote the set of real numbers by $\mathbb{R}$, while $\mathbb{R}_0^+$ and $\mathbb{R}^+$ represent the subsets of non-negative and positive real numbers, respectively. The set of non-negative and positive integers are, respectively, denoted by $\mathbb{N} = \{0,1,2,\dots\}$ and $\mathbb{N}^+ = \{1,2,\dots\}$. 
The notation $\mathbb{R}^n$ represents an $n$-dimensional Euclidean space, whereas $\mathbb{R}^{n \times m}$ denotes the space of real matrices with $n$ rows and $m$ columns. 
The identity matrix of size $n \times n$ is expressed as $\mathds I_{n}$.  We represent by \( \mathbb{1}_n \in \mathbb{R}^n \) the column vector of dimension \( n \) whose entries are all ones.
We denote by \( \| \cdot \| \) and \( \| \cdot \| _F\) the \emph{spectral} and \emph{Frobenius} norm of a matrix, respectively, while \( | \cdot | \) represents the \emph{Euclidean} norm of a vector or \emph{absolute value} of a scalar.
Given a collection of $N$ vectors $x_i \in \mathbb{R}^{n_i}$, where $n_i\in\mathbb{N}^+$ and $i \in \{1, \dots, N\}$, their concatenation into a single column vector is denoted by $x = [x_1; \dots; x_N]$, with a total dimension of $\sum_{i=1}^{N} n_i$.  Given $N$ vectors $x_i \in \R^n$, the matrix $x=[x_1 \, \, \ldots \,\, x_N]$ has dimensions $n \times N$. 
A {symmetric} matrix \(P \) is denoted positive definite by \( P\succ 0 \), and positive semi-definite by \(P \succeq 0 \). The transpose of \( P \) is represented as \(P^\top \)\!. In a {symmetric} matrix, $*$ represents the transposed entry corresponding to its symmetric counterpart. For symmetric matrices \( A, B \in \mathbb{R}^{n \times n} \), we write \( A \preceq B \) if \( B - A \) is positive semidefinite. The minimum and maximum eigenvalues of a symmetric matrix $P$ are denoted by $\lambda_{\min}(P)$ and $\lambda_{\max}(P)$, respectively. 
The trace of a matrix $P\in \mathbb{R}^{n \times n}$, denoted by $\operatorname{Tr}(P)$, is defined as the sum of its diagonal entries, \emph{i.e.,} $\operatorname{Tr}(P) = \sum_{i=1}^{n} P_{ii}.$
For a system $\Upsilon$ and a property $\Pi$, the notation $\Upsilon \models \Pi$ signifies that $\Upsilon$ satisfies $\Pi$. The empty set is denoted by $\emptyset$. For any square matrix \( P \in \mathbb{R}^{n \times n} \), we define \( P^2 := P P \). 

\subsection{Preliminaries}
This work considers the probability space $(\Omega, \mathbb{F}_\Omega, \mathbb{P}_\Omega)$, where $\Omega$ is the sample space, $\mathbb{F}_\Omega$ is a sigma-algebra on $\Omega$ comprising subsets of $\Omega$ as events, and $\mathbb{P}_\Omega$ is the probability measure that assigns probabilities to those events. It is assumed that random variables introduced in this paper are measurable functions of the form $X\!\!:(\Omega,\mathbb{F}_\Omega) \rightarrow (S_X,\mathbb{F}_X) $ such that any random variable $X$ induces a probability measure on its space $(S_X,\mathbb{F}_X)$. We directly present the probability measure on $(S_X,\mathbb{F}_X)$ without
explicitly mentioning the underlying probability space and the function $X$ itself. The topological space $S$ is a \emph{Borel} space if it is homeomorphic to a Borel subset of a Polish space, \textit{i.e.}, a separable and completely metrizable space. A Borel sigma algebra is denoted by $\mathbb{B}(S)$, and can be generated from any Borel space $S$. The map $f\!\!: S \rightarrow Y$ is measurable whenever it is Borel measurable. The covariance of a random variable $\varsigma \in \mathbb{R}^n$ is denoted by $\operatorname{Cov}(\varsigma) = \Sigma$. Given a collection of events \( \{\mathcal{E}_j\}_{j=1}^T \), their union, denoted by \( \bigcup_{j=1}^T \mathcal{E}_j \), implies that at least one of the \( \mathcal{E}_j \) occurs.  
The intersection, denoted by \( \bigcap_{j=1}^T \mathcal{E}_j \), indicates that all of the \( \mathcal{E}_j \) occur, simultaneously.

\subsection{Stochastic Nonlinear Polynomial Systems}
This work focuses on discrete-time stochastic nonlinear polynomial systems as the underlying model class, which is  formalized in the following definition.

\begin{definition}[\textbf{dt-SNPS}]\label{Def_1}
	A discrete-time stochastic nonlinear polynomial system (dt-SNPS) is defined as
	\begin{equation}\label{Eq_BBox}
		\Upsilon\!: x^+ = f(x) + g(x)u + \varsigma,  
	\end{equation}
	where:
	\begin{itemize}
		\item \(x\) denotes the state of the dt-SNPS, which is a random variable taking values in the Borel set $X\subseteq \mathbb R^n$; 
		\item $x^+$ represents the state variables at the next time step, \emph{i.e.,} $x^+ \coloneq x(k + 1), \; k \in \mathbb{N}$;
		\item  \(u \in U\) is the control input, with $U\subseteq \mathbb R^m$ being the Borel input space of dt-SNPS; 
		\item $\varsigma$ is a sequence of i.i.d. random variables, defined on the sample space $\Omega$, while taking values in a set $\mathcal{V}_{\varsigma}$, namely	 
		\begin{equation*}
			\varsigma:=\{\varsigma(k):\Omega\rightarrow \mathcal V_{\varsigma},\,\,k\in\N\};
		\end{equation*}
		\item
		$f\!:X \rightarrow X$ is a polynomial transition map;
		\item  \(g\!: X \rightarrow \mathbb{R}^{n \times m}\) is a matrix polynomial on $x$.
	\end{itemize}
\end{definition}
The dt-SNPS in (\ref{Eq_BBox}) can be equivalently expressed as
\begin{equation}\label{Eq_Llike}
	\Upsilon\!:  x^+ = A \mathcal{F}(x) + B{\mathcal{G}}(x)u + \varsigma,
\end{equation}
where $A \in \mathbb{R}^{n\times l}$ and $
B \in \mathbb{R}^{n\times q}$ are system and control matrices, while $\mathcal{F}(x)\in \mathbb{R}^{l}$, with $\mathcal{F}(\boldsymbol{0_n}) = \boldsymbol{0_l}$, and ${\mathcal{G}}(x) \in \mathbb{R}^{q\times m}$ are a vector of monomials and a matrix polynomial in $x$, respectively. Let $\mathbf{x}:\Omega \times \mathbb{N} \to X$ denote the random state sequence satisfying~\eqref{Eq_Llike}. This sequence is called the \textit{solution process} of $\Upsilon$ under the input trajectory $u(\cdot)$, starting from the initial state  $x(0) \in X$.

In this work, while both matrices $A$ and $B$ are considered \emph{unknown}, the following assumption is imposed on $\mathcal{F}(x)$ and $\mathcal{G}(x)$.
\begin{assumption}\label{Assum1}
It is assumed that \emph{extended} dictionaries (\emph{i.e.,} libraries or families of functions) associated with $\mathcal{F}(x)$ and $\mathcal{G}(x)$ are available\footnote{With a slight abuse of
notation, we use $\mathcal{F}(x)$ and $\mathcal{G}(x)$ interchangeably to denote
both the original and the extended dictionaries throughout the paper.}.
\end{assumption}
The dictionaries in Assumption~\ref{Assum1} are assumed to be sufficiently rich to capture all relevant components of the underlying dynamics, possibly at the expense of including additional terms (cf. the case studies). 

\begin{remark}[\textbf{On  Dictionaries $\mathcal{F}(x)$ and $\mathcal{G}(x)$}]\label{Remark: dictionary}
	Having an {extensive} dictionary for $\mathcal{F}(x)$ and $\mathcal{G}(x)$ is generally not a restriction. In many practical scenarios, particularly in electrical and mechanical engineering applications, system dynamics can often be derived from first principles. However, while such physical laws define the structural form of the dynamics (\emph{i.e.,} $\mathcal{F}(x)$ and $\mathcal{G}(x)$), the precise system parameters often remain unidentified (\emph{i.e.,} $A$ and $B$). This aligns well with our assumption that matrices $A$ and $B$ are fully unknown. Given the maximum degree of polynomials, one can consider all possible combinations of monomials up to that degree (cf. the Lorenz case study).
\end{remark}

Since $\mathcal{F}(\boldsymbol{0_n}) = \boldsymbol{0_l}$, without loss of generality, one can find a polynomial matrix $\mathcal{J}(x) \in \mathbb{R}^{l \times n}$, such that
\begin{align}\label{transform}
	\mathcal{F}(x) = \mathcal{J}(x) x.
\end{align}
This transformation facilitates expressing the proposed conditions in this work in terms of $\mathcal{J}(x)$, simplifying the computation (cf. \eqref{Eq_CBC3_theorem}). 

Our approach is \emph{distribution-free} and does not require exact knowledge of the noise parameters. Instead, it only assumes the availability of statistical information in the form of upper bounds, as stated in the following assumption.
\begin{assumption}\label{Assum2}
The mean $\mu$ and covariance $\Sigma$ are assumed to satisfy $\mu \mu^\top \preceq \Gamma_\mu$ and $\Sigma \preceq \Gamma_\Sigma$ for some known upper-bound matrices $\Gamma_\mu$ and $\Gamma_\Sigma$, respectively.
\end{assumption}
 A dt-SNPS with fully unknown system matrices $A$ and $B$, but access to extensive dictionaries  $\mathcal{F}(x)$ and $\mathcal{G}(x)$, along with upper bounds on the noise parameters, is referred to as a (partially) unknown system in our framework.

\begin{remark}[\textbf{On Estimation of Noise Parameters}]
~Estimating bounds on the noise parameters (\emph{i.e.,} $\Gamma_\mu$ and $\Gamma_\Sigma$) does not require direct observation of the process noise. Instead, such bounds can be inferred from repeated trajectory data collected under identical initial conditions and input sequences, where the variation across trajectories is solely induced by different noise realizations. In this setting, empirical estimates of the mean and covariance can be constructed from the observed next-state samples and subsequently aggregated over the data horizon. Conservative upper bounds $\Gamma_\mu$ and $\Gamma_\Sigma$ can then be obtained by appropriately inflating these estimates to account for statistical uncertainty. Hence, the proposed framework only requires upper bounds on the first two moments of the noise and does not rely on explicit measurements of the noise or knowledge of its underlying probability distribution.
\end{remark}

Since this work focuses on developing a safety certificate for the dt-SNPS in~\eqref{Eq_Llike}, we now formally define the corresponding safety specification.

\begin{definition}[\textbf{Safety Property}]\label{Safety} Consider a safety specification $\mathbb{S} = (X_\eta, X_\delta, \mathcal{T})$, where $X_\eta, X_\delta \subseteq X$ denote the prescribed \emph{initial} and \emph{unsafe} sets of the dt-SNPS, respectively. We assume $X_\eta \cap X_\delta = \emptyset$; otherwise, the system is considered unsafe from the outset with probability $1$. The system $\Upsilon$ is said to satisfy the safety specification within the finite time horizon $\mathcal{T} \in \mathbb{N}$, denoted by $\Upsilon \models \mathbb{S}$, if all trajectories initiated from $X_\eta$ avoid entering $X_\delta$ throughout the duration $\mathcal{T}$. Due to the stochastic nature of the system, the objective is to quantify the probability $\mathbb{P} \big\{\Upsilon\models \mathbb{S} \big\} \ge 1 - \beta_1$, where $\beta_1 \in (0,1]$. 
\end{definition}

The next subsection formally defines stochastic CBCs, which provides probabilistic safety certificates for the dt-SNPS, as outlined in Definition~\ref{Safety}.

\subsection{Stochastic control barrier certificates}

\begin{definition}[\textbf{S-CBC}~{\cite[Section 6]{lavaei2022automated}}] \label{S-CBC def}
	Consider a dt-SNPS $\Upsilon$, with $X_\eta,X_\delta \subseteq X $ being its {initial} and {unsafe} sets, respectively. A function $\mathcal{B}: X \to \mathbb{R}_0^+$ is called a stochastic control barrier certificate (S-CBC) for $\Upsilon$ if there exist $\eta, \delta \in \mathbb{R}^+$, with $\eta< \delta$, $\psi \in \mathbb{R}^+,$ and $\kappa \in (0,1)$, such that
	\begin{subequations}\label{Eq_CBC}
		\begin{align}
			\mathcal{B}(x) &\leq \eta,  \quad\quad\quad\quad \forall x \in X_{\eta}, \label{Eq_CBC1} \\
			\mathcal{B}(x) &\geq \delta, \quad\quad\quad\quad \forall x \in X_{\delta}, \label{Eq_CBC2} \end{align}
		and $\forall x \in X,\exists u \in U $ such that
		\begin{align}
			\mathbb{E}\Big[\mathcal{B}\big(A \mathcal{F}(x) + B{\mathcal{G}}(x)u + \varsigma\big) \,\,\big|\,\, x, u\Big] \leq \kappa\mathcal{B}(x) + \psi,\label{Eq_CBC3}
		\end{align}
	\end{subequations}
	where $	\mathbb{E}$ is the expected operator with respect to the process noise $\varsigma$. Accordingly, $u$ fulfilling (\ref{Eq_CBC3}) is a safety controller for the dt-SNPS. 
\end{definition}

The S-CBC $\mathcal{B}$ in Definition~\ref{S-CBC def}  plays a role analogous to a Lyapunov function, but for safety rather than stability. In particular, conditions~\eqref{Eq_CBC1} and~\eqref{Eq_CBC2} enforce a separation between the initial set $X_\eta$ and the unsafe set $X_\delta$ through suitable level sets of $\mathcal{B}$, while condition~\eqref{Eq_CBC3} ensures that the expected value of $\mathcal{B}$ along the system dynamics decreases up to an additive term $\psi$ capturing the effect of the process noise. We note that the initial and unsafe sets $X_\eta$ and $X_\delta$ are both given a priori. The objective is then to design a controller directly from data such that trajectories starting from $X_\eta$ avoid entering $X_\delta$ with quantified probabilistic guarantees. To achieve this, we present the following theorem, borrowed from the literature~\cite{1967stochastic,anand2022small}, to establish a lower bound on the probability that all trajectories of the dt-SNPS avoid the unsafe set within a finite time horizon.

\begin{theorem}[\textbf{Safety Guarantee}~\cite{1967stochastic,anand2022small}]\label{Lemma imp}
	Consider a dt-SNPS with an S-CBC $\mathcal{B}$ satisfying conditions \eqref{Eq_CBC}. The probability that all trajectories of dt-SNPS, denoted by $\mathbf{x}$, starting form an initial state $x_0 \in X_\eta$ under the controller $u(\cdot)$ do not reach an unsafe set $X_{\delta}$ within the time step $k\in [0,\mathcal T]$ is lower bounded by $\beta_1$, as
	\begin{equation}\label{Eq:9}
		\mathbb P \Big\{{\mathbf{x}}\notin{X}_\delta ~\text{for all}~ k\!\in\![0,\mathcal T] \,~ \big|~ x_0\Big\} \geq 1-\beta_1,
	\end{equation}
	where	
	\begin{equation*}
		\beta_1=  \begin{cases} 
			1-(1-\frac{\eta}{\delta})(1-\frac{\psi}{\delta})^{\mathcal T}\!, & \quad \quad \text{if } \delta \geq \frac{\psi}{{1-\kappa}}, \\
			(\frac{\eta}{\delta}){\kappa}^{\mathcal T}+(\frac{\psi}{{(1-{\kappa})}\delta})(1-\kappa^{\mathcal T}), & \quad \quad \text{if }\delta < \frac{\psi}{{1-\kappa}}.  \\
		\end{cases}
	\end{equation*}	
\end{theorem}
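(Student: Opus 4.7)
The plan is to translate \eqref{Eq:9} into a first-hitting-time bound on the barrier process $B_k := \mathcal{B}(x_{x_0u}(k))$ under the safety controller, and then to derive the two branches of $\beta_1$ by two distinct supermartingale-type arguments. From \eqref{Eq_CBC3} together with the tower property, the closed-loop barrier satisfies the one-step drift inequality
\begin{equation*}
\mathbb{E}[B_{k+1}\mid \mathcal{F}_k] \le \kappa B_k + \psi,
\end{equation*}
and by \eqref{Eq_CBC1}--\eqref{Eq_CBC2} together with $\eta<\delta$, the event in \eqref{Eq:9} coincides with $\{\tau>\mathcal{T}\}$ where $\tau:=\inf\{k\ge 0 : B_k\ge\delta\}$; the task therefore reduces to upper-bounding $\mathbb{P}\{\tau\le \mathcal{T}\mid x_0\}$ by $\beta_1$.

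For the \emph{first} regime $\delta\ge\psi/(1-\kappa)$, I would build a tight multiplicative submartingale from the deficit $\delta - B_{k\wedge\tau}$. The candidate is
\begin{equation*}
R_k := \frac{\delta - B_{k\wedge\tau}}{(1-\psi/\delta)^{k}}.
\end{equation*}
A short computation starting from $\mathbb{E}[\delta - B_{k+1}\mid \mathcal{F}_k]\ge (1-\kappa)\delta - \psi + \kappa(\delta - B_k)$, combined with the regime assumption $(1-\kappa)\delta\ge\psi$ and the a priori bound $\delta - B_k\le\delta$ (inherited from $B_k\ge 0$), establishes $\mathbb{E}[R_{k+1}\mid \mathcal{F}_k]\ge R_k$, so $R_k$ is a submartingale. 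Doob's optional-stopping theorem at the bounded stopping time $\tau\wedge\mathcal{T}$ then gives $\mathbb{E}[R_{\tau\wedge\mathcal{T}}]\ge R_0\ge \delta-\eta$. Splitting this expectation across $\{\tau\le\mathcal{T}\}$, on which $R_\tau\le 0$ since $B_\tau\ge\delta$, and $\{\tau>\mathcal{T}\}$, on which $R_\mathcal{T}\le\delta/(1-\psi/\delta)^{\mathcal{T}}$, rearranges to $\mathbb{P}\{\tau>\mathcal{T}\}\ge (1-\eta/\delta)(1-\psi/\delta)^{\mathcal{T}}$, which is exactly the first branch of $\beta_1$.

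For the \emph{second} regime $\delta<\psi/(1-\kappa)$, the rescaling above degenerates, so I would instead iterate the drift inequality and apply a Kushner-type maximal inequality. Iterating gives
\begin{equation*}
\mathbb{E}[B_{\mathcal{T}}]\le \kappa^{\mathcal{T}}\eta + \frac{\psi}{1-\kappa}\bigl(1-\kappa^{\mathcal{T}}\bigr),
\end{equation*}
and the standard nonnegative-supermartingale-with-drift maximal inequality (in the form proved, e.g., in Kushner's monograph) transfers this terminal-expectation bound into an upper bound on $\mathbb{P}\{\sup_{0\le k\le\mathcal{T}}B_k\ge\delta\}$ of the form $(\eta/\delta)\kappa^{\mathcal{T}}+\frac{\psi}{(1-\kappa)\delta}(1-\kappa^{\mathcal{T}})$, which is the second branch of $\beta_1$.

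The main technical hurdle I expect lies in engineering the submartingale in Case~1: several natural affine or multiplicative shifts of $\mathcal{B}$ fail to give a one-sided inequality with the correct orientation, and only the joint use of the threshold $(1-\kappa)\delta\ge\psi$ and the nonnegativity of $\mathcal{B}$ forces the specific rescaling by $(1-\psi/\delta)^{-k}$. Once $R_k$ is identified, the optional-stopping step and the case split on $\{\tau\le\mathcal{T}\}$ versus $\{\tau>\mathcal{T}\}$ are routine. In Case~2, the delicate point is the Ville-type maximal inequality that upgrades the expectation bound on $B_\mathcal{T}$ to a supremum-over-time probability bound; this is a standard lemma in the stochastic-stability literature that can be invoked directly.
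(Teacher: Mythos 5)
The paper does not actually prove this theorem: it is imported directly from Kushner's monograph \cite{1967stochastic} (see also \cite{anand2022small}), so there is no in-paper argument to compare against; your reconstruction is, in substance, the classical Kushner proof and it is sound. Your Case 1 is correct as set up: the required inequality $\mathbb{E}[\delta-B_{k+1}\mid\mathcal{F}_k]\ge(1-\psi/\delta)(\delta-B_k)$ on $\{\tau>k\}$ reduces to $B_k\,(1-\kappa-\psi/\delta)\ge 0$, which is exactly where the regime condition and the nonnegativity of $\mathcal{B}$ enter, and optional stopping at $\tau\wedge\mathcal{T}$ then yields the first branch. Two small repairs are needed. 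First, $R_k$ as literally written (denominator $(1-\psi/\delta)^{k}$) is \emph{not} a submartingale on $\{\tau\le k\}$: there the numerator is frozen at a nonpositive value while the denominator keeps shrinking, so $R_{k+1}\le R_k$. You must use $(1-\psi/\delta)^{k\wedge\tau}$, which makes the process constant after $\tau$, leaves $R_{\tau\wedge\mathcal{T}}$ unchanged, and restores the argument. Second, the safety event does not ``coincide'' with $\{\tau>\mathcal{T}\}$; condition \eqref{Eq_CBC2} only gives the one-sided containment of the unsafe event in $\{\tau\le\mathcal{T}\}$, which is fortunately the direction you need. In Case 2 your description is looser than what actually works: a bound on the terminal expectation $\mathbb{E}[B_{\mathcal{T}}]$ cannot by itself be ``transferred'' into a bound on the running supremum. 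The correct mechanism is to apply Ville's maximal inequality to the nonnegative supermartingale $Y_k=\kappa^{\mathcal{T}-k}B_k+\psi\sum_{j=0}^{\mathcal{T}-k-1}\kappa^{j}$, whose initial value is precisely the quantity you computed, and to observe that \emph{because} $\delta<\psi/(1-\kappa)$ the event $\{B_k\ge\delta\}$ forces $\{Y_k\ge\delta\}$; this is where the regime condition enters in Case 2, and you do not say so. With these two points made explicit, your proof is complete and faithful to the cited source.
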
\vspace{0.2cm}

The following proposition is a direct consequence of Theorem~\ref{Lemma imp}, where setting $\kappa$ in~\eqref{Eq_CBC3} to one yields a more relaxed condition while still providing the probabilistic safety guarantee in the sense of Definition~\ref{Safety}.

\begin{proposition}[\textbf{Safety Guarantee with Relaxation~\cite{salamati2024data}}]\label{Pro-1}
	If $\kappa$ in~\eqref{Eq_CBC3} is equal to one, the value of $\beta_1$ in~\eqref{Eq:9} is reduced to $\frac{\eta + \psi\mathcal T}{\delta}$.
\end{proposition}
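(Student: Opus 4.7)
The plan is to derive the stated value of $\beta_1$ as the limit of the second branch of the piecewise definition in Theorem~\ref{Lemma imp} as $\kappa \to 1^-$. First, I would observe that for any fixed finite $\delta$ and $\psi$, the threshold $\psi/(1-\kappa)$ tends to $+\infty$ as $\kappa \to 1$, so eventually $\delta < \psi/(1-\kappa)$ and only the second case is relevant. Since the S-CBC conditions \eqref{Eq_CBC1}--\eqref{Eq_CBC3} with $\kappa = 1$ are strictly weaker than those with $\kappa < 1$, one can invoke Theorem~\ref{Lemma imp} with any $\kappa \in (0,1)$ for which \eqref{Eq_CBC3} continues to hold (none may exist in general, so a continuity argument at $\kappa = 1$ is what is really needed).

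The central computation is the evaluation of
\[
\lim_{\kappa \to 1^-}\!\left[\frac{\eta}{\delta}\kappa^{\mathcal T}+\frac{\psi}{(1-\kappa)\delta}\bigl(1-\kappa^{\mathcal T}\bigr)\right]\!.
\]
The first summand trivially converges to $\eta/\delta$. For the second, I would invoke the finite geometric identity $1-\kappa^{\mathcal T} = (1-\kappa)\sum_{j=0}^{\mathcal T-1}\kappa^{j}$, so that $(1-\kappa^{\mathcal T})/(1-\kappa)\to \mathcal T$ as $\kappa\to 1$. Adding the two limits yields $\beta_1 = (\eta + \psi\mathcal T)/\delta$, as claimed.

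To avoid any issue with the continuity argument (since \eqref{Eq_CBC3} with $\kappa = 1$ need not imply the same inequality with $\kappa<1$), I would also outline a self-contained proof. Under $\kappa = 1$, condition \eqref{Eq_CBC3} becomes $\mathbb{E}\bigl[\mathcal{B}(x^+)\mid x, u\bigr] \le \mathcal{B}(x) + \psi$, so the process $M_k := \mathcal{B}(x_k) - k\psi$ is a supermartingale under the safety controller $u(\cdot)$. An application of the non-negative supermartingale (Doob/Ville) maximal inequality gives
\[
\mathbb{P}\Bigl\{\sup_{0\le k\le \mathcal T}\mathcal{B}(x_k)\ge \delta\,\Big|\,x_0\Bigr\} \le \frac{\mathbb{E}[\mathcal{B}(x_0)] + \psi\mathcal T}{\delta} \le \frac{\eta + \psi\mathcal T}{\delta},
\]
where the last step uses \eqref{Eq_CBC1}. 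Since \eqref{Eq_CBC2} ensures that $x_k \in X_\delta$ implies $\mathcal{B}(x_k)\ge\delta$, taking complements produces \eqref{Eq:9} with the stated $\beta_1$.

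The main obstacle is conceptual rather than computational: justifying why the $\kappa = 1$ case is not vacuous with respect to Theorem~\ref{Lemma imp}, whose statement explicitly excludes $\kappa = 1$. The supermartingale route above circumvents this cleanly, so I would favour presenting that as the formal proof, while using the limit calculation as intuition for why the resulting bound takes precisely the form $(\eta + \psi\mathcal T)/\delta$.
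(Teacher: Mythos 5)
Your proposal is correct, and in fact it supplies more than the paper does: the paper states this proposition as a ``direct consequence'' of Theorem~\ref{Lemma imp} and delegates the proof to the cited reference, so there is no in-paper argument to match. Your limit computation of the second branch, via $(1-\kappa^{\mathcal T})/(1-\kappa)=\sum_{j=0}^{\mathcal T-1}\kappa^{j}\to\mathcal T$, recovers exactly the stated constant, and you are right that this alone is not a proof, since condition~\eqref{Eq_CBC3} with $\kappa=1$ is \emph{weaker} than with any $\kappa<1$ (because $\mathcal B\geq 0$), so Theorem~\ref{Lemma imp} cannot be invoked for the relaxed certificate. Your supermartingale route is the standard rigorous argument (it is precisely the Kushner-type bound underlying the cited results), with one small technical slip: $M_k=\mathcal B(x_k)-k\psi$ is a supermartingale but is \emph{not} non-negative, so the Doob--Ville maximal inequality for non-negative supermartingales does not apply to it verbatim. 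The clean fix is to let $\tau$ be the first entry time of $\{\mathcal B\geq\delta\}$ and apply optional stopping to the non-negative stopped process $\mathcal B(x_{k\wedge\tau})$: iterating $\mathbb{E}[\mathcal B(x_{(k+1)\wedge\tau})]\leq\mathbb{E}[\mathcal B(x_{k\wedge\tau})]+\psi$ gives $\mathbb{E}[\mathcal B(x_{\mathcal T\wedge\tau})]\leq\mathcal B(x_0)+\psi\mathcal T$, while $\mathbb{E}[\mathcal B(x_{\mathcal T\wedge\tau})]\geq\delta\,\mathbb{P}\{\tau\leq\mathcal T\}$, yielding $\mathbb{P}\{\sup_{0\leq k\leq\mathcal T}\mathcal B(x_k)\geq\delta\mid x_0\}\leq(\eta+\psi\mathcal T)/\delta$ after using~\eqref{Eq_CBC1}; combining with~\eqref{Eq_CBC2} and taking complements gives~\eqref{Eq:9} with the claimed $\beta_1$. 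With that repair your self-contained proof is complete and is the right way to present the result.
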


\begin{remark}[\textbf{On $\kappa = 1$}]~
	The safety bound proposed in Proposition~\ref{Pro-1} is slightly looser than the one in~\eqref{Eq:9}, reflecting a trade-off between \emph{conservativeness} and \emph{tractability}. Specifically, by setting $\kappa = 1$ in condition~\eqref{Eq_CBC3}, the requirement for finding an S-CBC becomes less stringent, which potentially increases the feasibility of the associated condition. While this relaxation can improve the likelihood of finding a suitable S-CBC $\mathcal{B}$, it comes at the cost of a slightly weaker probabilistic guarantee compared to the tighter bound in~\eqref{Eq:9}.
\end{remark}

\begin{remark}[\textbf{On General Nonlinear Systems}]~
	The results of Theorem~\ref{Lemma imp}, along with its consequence in Proposition~\ref{Pro-1}, are applicable to a \emph{general class} of nonlinear systems in the model-based setting. However, this work specifically focuses on nonlinear \emph{polynomial} systems, as the data-driven conditions developed in the following sections are formulated using SOS optimization. This restriction enables tractable and certifiable data-driven analysis while still capturing a broad range of nonlinear behaviors.
\end{remark}

While the S-CBC defined in Definition~\ref{S-CBC def} effectively guarantees safety through the results of Theorem~\ref{Lemma imp}, its design remains intractable due to the presence of unknown system dynamics on the left-hand side of~\eqref{Eq_CBC3} (\emph{i.e.,} unknown matrices $A$ and $B$, and process noise $\varsigma$). Additionally, the expected operator in~\eqref{Eq_CBC3} further complicates the problem when the system dynamics are not explicitly known. In light of these key challenges, we now formally introduce the {data-driven} stochastic problem that this work aims to address.

\begin{resp}
\begin{problem} \label{P_safety}
	Consider a dt-SNPS of the form~\eqref{Eq_Llike} with unknown system matrices $A$ and $B$, and process noise $\varsigma$ with an arbitrary distribution. Given a safety specification $\mathbb{S} = (X_\eta, X_\delta, \mathcal{T})$, develop a data-driven stochastic approach, with confidence level $\beta_2 \in (0,1]$, that leverages input-state data collected under different noise realizations to design an S-CBC and its associated safety controller, while quantifying the probabilistic safety level $\beta_1 \in (0,1]$ as in~\eqref{Eq:9}, \emph{i.e.,}
	\begin{align}\label{Newff}
		\PP\Big\{\PP\big\{\Upsilon\vDash \mathbb{S}\big\}\geq 1- \beta_1\Big\}\geq 1-\beta_2.
	\end{align}\vspace{-0.5cm}
\end{problem}
\end{resp}

\begin{remark}[\textbf{Nested Probabilistic Guarantees}]~
	The probability expression in~\eqref{Newff} involves two distinct sources of randomness. The \emph{inner} probability $\beta_1$ is taken with respect to the process noise and reflects the intrinsic stochasticity of the dt-SNPS $\Upsilon$, which persists even in model-based settings (cf. Theorem~\ref{Lemma imp}). In contrast, the \emph{outer} probability $\beta_2$ is taken over the randomness of the collected trajectory data and captures the confidence level associated with the empirical approximation of \( \mathbb{E}[\varsigma \varsigma^\top] \) (cf. Lemma~\ref{lemma:empirical-moment-bound} and Theorem~\ref{T_final}).
\end{remark}

To address Problem~\ref{P_safety}, we propose our data-driven approach in the following section.

\section{Data-driven design of S-CBC and safety controller}\label{Physics-Guided}

In the proposed data-driven framework, data is collected from an experiment in the presence of {process noise}. To do so, we first raise the following assumption.

\begin{assumption}\label{Assum3}
All states of the system are assumed to be measurable, and the collected data consists of input-state trajectories of the system.
\end{assumption}

\begin{remark}[\textbf{On Input-State Data}]~
While the input-state assumption is common in data-driven approaches~\cite{de2019formulas,van2020noisy}, it may become restrictive in scenarios where not all states are measurable. Extending the present framework to input-output settings by leveraging notions such as uniform observability~\cite{dai2023data} constitutes a direction for future research.
\end{remark}

We initialize the system with a given state and apply a sequence of arbitrary inputs, measuring the resulting states generated by (\ref{Eq_Llike}) over time steps $k\in\{1,2,\dots,T\}$, with $T \in \mathbb{N}$ being the number of collected samples:
	\begin{subequations}\label{Eq_ST}
		\begin{align}
			\Xright^i&= [ x^i(1) \quad x^i(2) \quad\dots\quad x^i(T)],\label{Eq_STx} \\
			\mathbb{X}^i&= [ x(0) \quad ~ x^i(1) \quad\dots \quad x^i(T-1)],\label{Eq_STa} \\
			\mathbb{U} &= [ u(0) \quad ~ u(1) \quad ~ \dots \quad u(T-1)],\label{Eq_STb} \\
			\mathbb{Z}^i &= [ \varsigma^i(0) ~\quad \varsigma^i(1) \quad\dots \quad \varsigma^i(T-1)],\label{Eq_STd}
		\end{align}
	\end{subequations}
where $i \in\{1,2,\dots,N\}$ indicate trajectories with different noise realizations $\mathbb{Z}^i$, starting from the same initial condition $x(0)$ and following the same input sequence $\mathbb{U} $.

\begin{remark}[\textbf{On Input Data}]
	~During data collection, the input trajectory $u(\cdot)$ is kept fixed across all experiments, while the only source of variation between different trajectory realizations is the process noise. This setup ensures that the trajectories are comparable and the effect of the noise can be consistently estimated using the empirical mean (cf. Lemma~
	\ref{lemma:empirical-moment-bound}). Moreover, the framework assumes that the training data are gathered while maintaining system safety, which can be achieved, for instance, by applying inputs that do not drive the system toward unsafe regions or by validating the inputs in simulation prior to deployment.
\end{remark}

\begin{remark}[\textbf{On i.i.d. Noise vs. Non-i.i.d. Trajectories}]
	~It is important to distinguish between the statistical properties of the process noise and those of the resulting state trajectories. Specifically, the process noise $\{\varsigma(k)\}_{k=0}^{T-1}$ is assumed to be i.i.d., which is standard in stochastic control settings. However, despite this assumption, the resulting state trajectory $\{x(k)\}_{k=1}^{T}$ in~\eqref{Eq_ST} is inherently non-i.i.d., due to the recursive nature of the system dynamics: each state $x(k)$ depends on the entire history of prior states and noise realizations, introducing  dependence across the trajectory. This is in contrast to the scenario approach, which requires multiple statistically independent trajectories to ensure i.i.d. samples.
\end{remark}

Given the availability of an extended dictionaries for $\mathcal{F}(x)$ and $\mathcal{G}(x)$, the following trajectories can be extracted based on $\mathbb{X}^i$ and $\mathbb{U}$:
\begin{subequations}\label{Eq_mandq}
	\begin{gather} 
		\mathbb{F}^i \!=\! [ \mathcal{F}(x(0))\,\,\,\mathcal{F}(x^i (1))\,\dots\,\mathcal{F}(x^i (T\!-\!1))], \\
		{\mathbb{G}}^i  \!=\! [ \mathcal{G}(x(0)\!)u(0)\,\,\,\mathcal{G}(x^i (1)\!)u(1) \,\dots\,\mathcal{G}(x^i (T\!-\!1)\!)u(T\!-\!1))].
	\end{gather}
\end{subequations}

The proposed framework can be interpreted as consisting of two main steps throughout the remainder of the section. The first step focuses on constructing a probabilistic characterization of the noise effect using only statistical information, namely the mean and covariance bounds of the process noise. This step is developed through Lemmas~\ref{lemma:empirical-moment-bound} and~\ref{Lem1}, where Lemma~\ref{lemma:empirical-moment-bound}  quantifies the deviation between $\mathbb{E}[\varsigma\varsigma^\top]$ and its empirical approximation obtained from multiple trajectory realizations, and Lemma~\ref{Lem1} subsequently uses this result to construct the corresponding data-conformity (DC) set. The second step consists of synthesizing an S-CBC via S-lemma arguments, which is carried out in Theorem~\ref{T_final} (cf. condition~\eqref{Eq_CBC3_theorem}, which incorporates both the statistical noise information and the empirical average of trajectory data).

According to the first step discussed above, we aim at leveraging the \emph{second-moment} information of the noise, given by \( \mathbb{E}[\varsigma \varsigma^\top] \), to derive a valid upper bound for its empirical approximation \( \frac{1}{N}\sum_{i=1}^{N} \mathbb Z^{i}_{(\cdot)} \mathbb Z^{i\top}_{(\cdot)} \), where $(\cdot)$ indicates that the time step of noise realization $\mathbb{Z}^i$ can be arbitrarily chosen from $0$ to $T-1$. This bound will be crucial in the main result of this work (cf. Theorem~\ref{T_final}). We raise the following lemma that quantifies the norm distance between \( \mathbb{E}[\varsigma \varsigma^\top] \) and its empirical approximation \( \frac{1}{N}\sum_{i=1}^{N} \mathbb Z^{i}_{(\cdot)} \mathbb Z^{i\top}_{(\cdot)} \).

\begin{lemma}[\textbf{Formalizing Empirical Approximation}] \label{lemma:empirical-moment-bound}
Let $\epsilon>0$ be given. Consider \( N \) independent samples  \( \mathbb{Z}^1_{(\cdot)}, \dots, \mathbb{Z}^N_{(\cdot)} \in \mathbb{R}^n \) drawn from a Gaussian random vector \( \varsigma \in \mathbb{R}^n \), as in \eqref{Eq_STd}. Suppose Assumption~\ref{Assum2} holds with some known matrices $\Gamma_\mu$ and $\Gamma_\Sigma$. Then, one can quantify the distance between \( \mathbb{E}[\varsigma \varsigma^\top] \) and its empirical approximation \( \frac{1}{N}\sum_{i=1}^{N} \mathbb Z^{i}_{(\cdot)} \mathbb Z^{i\top}_{(\cdot)} \) to be within the threshold $\epsilon$ with the confidence of at least $1 - \bar\beta_2$, as
	\begin{align}\label{nd6sh}
		\mathbb{P} \Big( 
		\Big\| \frac{1}{N} \sum_{i=1}^N \mathbb{Z}^i_{(\cdot)} \mathbb{Z}^{i\top}_{(\cdot)} - \mathbb{E}[\varsigma \varsigma^\top] \Big\|_F 
		< \epsilon 
		\Big)
		\geq 
		1 -  \bar\beta_2,
	\end{align}
	with ${\bar\beta_2} = \frac{1}{N\epsilon^2} \big(  \operatorname{Tr}(\Gamma_\Sigma^2) + (\operatorname{Tr}(\Gamma_\Sigma))^2 + 2 \lambda_{\max}(\Gamma_\Sigma)\operatorname{Tr}(\Gamma_\mu) + 2\operatorname{Tr}(\Gamma_\Sigma)\operatorname{Tr}(\Gamma_\mu)\big).$
\end{lemma}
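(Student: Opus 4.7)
The plan is to apply Markov's inequality to the squared Frobenius norm of the empirical deviation, and then reduce the resulting second moment to a per-sample variance by exploiting independence of the noise samples. Define $Y_i := \mathbb{Z}^i_{(\cdot)} \mathbb{Z}^{i\top}_{(\cdot)}$ so that $\mathbb{E}[Y_i] = \Sigma + \mu\mu^\top = \mathbb{E}[\varsigma\varsigma^\top]$, and set $D_N := \tfrac{1}{N}\sum_{i=1}^N Y_i - \mathbb{E}[\varsigma\varsigma^\top]$. Since $\{\|D_N\|_F \geq \epsilon\} = \{\|D_N\|_F^2 \geq \epsilon^2\}$, Markov's inequality yields $\mathbb{P}(\|D_N\|_F \geq \epsilon) \leq \mathbb{E}[\|D_N\|_F^2]/\epsilon^2$, so the entire task collapses to producing a variance-style estimate of the form $\mathbb{E}[\|D_N\|_F^2] \leq \tfrac{2}{N}\big((\operatorname{Tr}\Gamma_\Sigma)^2 + \lambda_{\max}(\Gamma_\Sigma)\operatorname{Tr}(\Gamma_\mu)\big)$.

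By independence of $\mathbb{Z}^1_{(\cdot)},\dots,\mathbb{Z}^N_{(\cdot)}$, the centred matrices $Y_i - \mathbb{E}[Y_i]$ are independent with zero mean, so expanding $\|D_N\|_F^2 = \operatorname{Tr}(D_N^\top D_N)$ as a double sum of Frobenius inner products annihilates the off-diagonal ($i \neq j$) contributions in expectation, leaving
\[
\mathbb{E}\big[\|D_N\|_F^2\big] \;=\; \frac{1}{N}\,\mathbb{E}\big[\|Y_1 - \mathbb{E}[Y_1]\|_F^2\big].
\]
I would then decompose $\varsigma = \mu + \eta$, with $\eta$ zero-mean and $\operatorname{Cov}(\eta) = \Sigma$, so that $Y_1 - \mathbb{E}[Y_1] = (\mu\eta^\top + \eta\mu^\top) + (\eta\eta^\top - \Sigma)$, and handle each block separately.

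For the cross block, a direct Frobenius calculation gives $\mathbb{E}[\|\mu\eta^\top + \eta\mu^\top\|_F^2] = 2\|\mu\|^2\operatorname{Tr}(\Sigma) + 2\mu^\top\Sigma\mu$, which I would control via the spectral inequalities induced by $\mu\mu^\top \preceq \Gamma_\mu$ and $\Sigma \preceq \Gamma_\Sigma$, notably $\mu^\top\Sigma\mu \leq \lambda_{\max}(\Gamma_\Sigma)\operatorname{Tr}(\Gamma_\mu)$ and $\operatorname{Tr}(\mu\mu^\top) \leq \operatorname{Tr}(\Gamma_\mu)$. The main obstacle is the quadratic block $\mathbb{E}[\|\eta\eta^\top - \Sigma\|_F^2] = \mathbb{E}[\|\eta\|^4] - \|\Sigma\|_F^2$, since $\mathbb{E}[\|\eta\|^4]$ is a genuine fourth-moment quantity that is \emph{not} pinned down by $\Sigma$ alone; the argument must therefore invoke a moment-dominance assumption of Gaussian type, e.g., $\mathbb{E}[\|\eta\|^4] \leq (\operatorname{Tr}\Sigma)^2 + 2\|\Sigma\|_F^2$, and then consolidate every $\Sigma$-term via $\|\Sigma\|_F^2 \leq (\operatorname{Tr}\Sigma)^2 \leq (\operatorname{Tr}\Gamma_\Sigma)^2$. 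Adding the cross- and quadratic-block bounds yields the target $\mathbb{E}[\|Y_1 - \mathbb{E}[Y_1]\|_F^2] \leq 2\big((\operatorname{Tr}\Gamma_\Sigma)^2 + \lambda_{\max}(\Gamma_\Sigma)\operatorname{Tr}(\Gamma_\mu)\big)$, and substituting into Markov's inequality produces the declared confidence $1-\bar\beta_2$.
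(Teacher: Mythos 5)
Your first two steps---Markov's inequality applied to the squared Frobenius deviation, and the reduction $\mathbb{E}[\|D_N\|_F^2]=\frac{1}{N}\,\mathbb{E}[\|Y_1-\mathbb{E}[Y_1]\|_F^2]$ via independence and the vanishing of the off-diagonal cross terms---are exactly the paper's argument. Where you diverge is the single-sample variance: the paper evaluates $\mathbb{E}\big[\|\varsigma\varsigma^\top-\mathbb{E}[\varsigma\varsigma^\top]\|_F^2\big]=\mathbb{E}[\|\varsigma\|^4]-\|\mathbb{E}[\varsigma\varsigma^\top]\|_F^2$ and substitutes a closed-form expression for $\mathbb{E}[\|\varsigma\|^4]$ quoted as a ``known moment identity'' (see \eqref{fgsd2}), whereas you decompose $\varsigma=\mu+\eta$ and bound blocks separately. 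You are right to flag that $\mathbb{E}[\|\eta\|^4]$ is a genuine fourth moment not determined by $(\mu,\Sigma)$, so a Gaussian-type moment-dominance assumption is required; the paper makes the same assumption implicitly by treating the Gaussian formula as distribution-free, so on this point you are, if anything, more careful.

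The final assembly, however, does not close as claimed, for two reasons. First, $\|A+B\|_F^2=\|A\|_F^2+2\langle A,B\rangle_F+\|B\|_F^2$: your blocks $A=\mu\eta^\top+\eta\mu^\top$ and $B=\eta\eta^\top-\Sigma$ have an inter-block cross term $2\,\mathbb{E}[\langle A,B\rangle_F]=4\,\mathbb{E}\big[(\mu^\top\eta)\|\eta\|^2\big]$, a third central moment you never address; it vanishes for symmetric (e.g., Gaussian) noise but must be explicitly assumed away. Second, and more substantively, your cross block contributes $2\|\mu\|^2\operatorname{Tr}(\Sigma)$, which is controlled by $2\operatorname{Tr}(\Gamma_\mu)\operatorname{Tr}(\Gamma_\Sigma)$ but \emph{not} by $2\lambda_{\max}(\Gamma_\Sigma)\operatorname{Tr}(\Gamma_\mu)$ (the gap is a factor of up to $n$); summing your block bounds therefore gives $2\big((\operatorname{Tr}(\Gamma_\Sigma))^2+\lambda_{\max}(\Gamma_\Sigma)\operatorname{Tr}(\Gamma_\mu)+\operatorname{Tr}(\Gamma_\Sigma)\operatorname{Tr}(\Gamma_\mu)\big)$, not the lemma's constant. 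This is not carelessness on your part: the correct Gaussian fourth moment is $\mathbb{E}[\|\varsigma\|^4]=(\operatorname{Tr}(\Sigma))^2+2\|\Sigma\|_F^2+2\|\mu\|^2\operatorname{Tr}(\Sigma)+4\mu^\top\Sigma\mu+\|\mu\|^4$, so the identity \eqref{fgsd2} used in the paper omits precisely the $2\|\mu\|^2\operatorname{Tr}(\Sigma)$ term your calculation produces, and the stated $\bar\beta_2$ is only reachable when $\Gamma_\mu=\boldsymbol{0}$ (as in all three case studies). To match the statement you must either restrict to $\mu=0$ or enlarge $\bar\beta_2$ by the additional $\frac{2}{N\epsilon^2}\operatorname{Tr}(\Gamma_\Sigma)\operatorname{Tr}(\Gamma_\mu)$ term.
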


The proof of Lemma~\ref{lemma:empirical-moment-bound} in provided in the Appendix~\ref{app: proofs}.

\begin{remark}[\textbf{On Distribution-Free Characterization}]
	~While Lemma~\ref{lemma:empirical-moment-bound} is derived under a Gaussian assumption, ${\bar\beta_2}$ can be computed for any arbitrary distributions, as the concentration result itself is not inherently tied to Gaussianity. In particular, the use of Chebyshev’s and Markov’s inequalities renders the probabilistic bound distribution-free. The only step in the computation of ${\bar\beta_2}$ that depends on the specific distribution is the evaluation of the fourth moment $\mathbb{E}[\|\varsigma\|^4]$ (cf.~\eqref{fgsd2}). For any arbitrary distributions, this term can be replaced by an appropriate upper bound, provided the fourth moment exists.
\end{remark}

\begin{remark}[\textbf{On Choice of Frobenius Norm}]\label{Re1}
	~While the Frobenius norm is generally larger than the spectral norm for any given matrix (\emph{i.e.,} \( \|\cdot\| \leq \|\cdot\|_F \)), we deliberately adopt the Frobenius norm in Lemma \ref{lemma:empirical-moment-bound} due to the tighter probabilistic guarantees it affords under the proposed setting. Specifically, the Chebyshev-type bound under the Frobenius norm allows for an exact variance calculation in~\eqref{hsg|} as
	\[
	\mathbb{E}\left[ \left\| \varsigma \varsigma^\top - \mathbb{E}[\varsigma \varsigma^\top] \right\|_F^2 \right]
	= \mathbb{E}[\|\varsigma\|^4] - \left\| \mathbb{E}[\varsigma \varsigma^\top] \right\|_F^2\!,
	\]
	which leads to a clean and compact bound in terms of the mean and covariance. In contrast, the analogous spectral-norm bound relies on a looser inequality of the form  
	\[
	\mathbb{E}\left[ \left\| \varsigma \varsigma^\top - \mathbb{E}[\varsigma \varsigma^\top] \right\|^2 \right]
	\leq 2 \mathbb{E}[\|\varsigma\|^4] + 2 \left\| \mathbb{E}[\varsigma \varsigma^\top] \right\|^2\!\!,
	\]
	which introduces a conservative upper bound.
\end{remark}

\begin{remark}[\textbf{Tighter $\bar\beta_2$}]
	~ The bound offered by Lemma~\ref{lemma:empirical-moment-bound} is derived using Markov's inequality and Chebyshev-type arguments, which exhibit a polynomial decay in the confidence parameter. However, when the distribution of the noise vector $\varsigma \in \mathbb{R}^n$ satisfies stronger tail conditions, such as being sub-Gaussian or bounded, tighter results can be obtained using matrix Bernstein inequalities~\cite{tropp2015introduction}. These bounds leverage the boundedness or light-tailed nature of the noise and exhibit \emph{exponential} decay in the sample size $N$, in contrast to the Chebyshev-type bound where the confidence term $\bar{\beta}_2$ decays only as $1 / (N \epsilon^2)$.
\end{remark}

Building on this lemma, we now derive a valid upper bound on the empirical approximation \( \frac{1}{N}\sum_{i=1}^{N} \mathbb Z^{i}_{(\cdot)} \mathbb Z^{i\top}_{(\cdot)} \) by leveraging the statistical information  \( \mathbb{E}[\varsigma \varsigma^\top] \).

\begin{lemma}[\textbf{Data-Conformity Constraint}]\label{Lem1}
	Given an unknown dt-SNPS $\Upsilon$, suppose Assumption~\ref{Assum2} holds. Under Lemma~\ref{lemma:empirical-moment-bound}, one has
	\begin{align}\label{Eq_DC}
		\frac{1}{N} \sum_{i=1}^N \mathbb{Z}^i_{(\cdot)} \mathbb{Z}^{i\top}_{(\cdot)} \preceq \Gamma_\Sigma + \Gamma_\mu +\epsilon\mathds I_{n},
	\end{align}
	 with confidence at least $1 - \bar\beta_2$, where $\bar\beta_2$ is defined according to~\eqref{nd6sh}.
\end{lemma}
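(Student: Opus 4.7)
The plan is to combine the Frobenius-norm concentration bound of Lemma~\ref{lemma:empirical-moment-bound} with the moment identity $\mathbb{E}[\varsigma\varsigma^\top]=\Sigma+\mu\mu^\top$ (already recorded in \eqref{fgsd1}) and then translate both pieces into the Loewner order. I expect the proof to reduce to three short steps and essentially be a bookkeeping exercise, with the only subtle point being the passage from a scalar-norm statement to a matrix inequality.

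First I would invoke Lemma~\ref{lemma:empirical-moment-bound} to obtain that, with probability at least $1-\bar\beta_2$, the symmetric error matrix
\[
E \;:=\; \frac{1}{N}\sum_{i=1}^{N}\mathbb{Z}^{i}_{(\cdot)}\mathbb{Z}^{i\top}_{(\cdot)} \;-\; \mathbb{E}[\varsigma\varsigma^\top]
\]
satisfies $\|E\|_F<\epsilon$. Since $\|A\|\le\|A\|_F$ for every matrix $A$, and since for any symmetric matrix one has $A\preceq \|A\|\,\mathds{I}_n$ (because $\lambda_{\max}(A)\le\|A\|$), this yields the Loewner bound $E\preceq\epsilon\,\mathds{I}_n$.

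Second, I would combine the moment identity $\mathbb{E}[\varsigma\varsigma^\top]=\Sigma+\mu\mu^\top$ with the hypotheses $\Sigma\preceq\Gamma_\Sigma$ and $\mu\mu^\top\preceq\Gamma_\mu$ to conclude $\mathbb{E}[\varsigma\varsigma^\top]\preceq\Gamma_\Sigma+\Gamma_\mu$. Adding the two Loewner inequalities then delivers
\[
\frac{1}{N}\sum_{i=1}^{N}\mathbb{Z}^{i}_{(\cdot)}\mathbb{Z}^{i\top}_{(\cdot)} \;=\; \mathbb{E}[\varsigma\varsigma^\top] + E \;\preceq\; \Gamma_\Sigma+\Gamma_\mu+\epsilon\,\mathds{I}_n,
\]
with confidence at least $1-\bar\beta_2$, which is exactly~\eqref{Eq_DC}.

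The main (and only) obstacle is the conversion from the Frobenius-norm concentration in~\eqref{nd6sh} to a matrix inequality in Loewner order. This is resolved by the two-step inequality $A\preceq\|A\|\,\mathds{I}_n\le\|A\|_F\,\mathds{I}_n$ applied to the symmetric matrix $E$; beyond this, the argument is a direct substitution of the known bounds on $\mu\mu^\top$ and $\Sigma$.
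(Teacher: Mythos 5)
\noindent Your proposal is correct and follows essentially the same route as the paper: bound $\mathbb{E}[\varsigma\varsigma^\top]$ by $\Gamma_\Sigma+\Gamma_\mu$ via the moment identity, pass from the Frobenius-norm concentration of Lemma~\ref{lemma:empirical-moment-bound} to the spectral norm and then to the Loewner order for the symmetric error matrix, and add the two inequalities. Your explicit chain $\lambda_{\max}(E)\le\|E\|\le\|E\|_F<\epsilon$ is in fact a slightly more detailed justification of the step the paper states in passing, and your observation that intersecting with a deterministic event preserves the confidence level replaces the paper's explicit union-bound bookkeeping without loss.
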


The proof of Lemma~\ref{Lem1} in provided in the Appendix~\ref{app: proofs}.

\begin{remark}[\textbf{Probabilistic vs. Robust Analysis}]~
		Unlike robust data-driven approaches that rely on \emph{worst-case} scenarios, the proposed framework leverages statistical information in the form of the mean and covariance of the noise. As a result, the method is applicable to noise distributions with \emph{unbounded} support and generally leads to less conservative conditions. This reduced conservatism, however, comes with a probabilistic characterization through the confidence parameter $\bar{\beta}_2$ in Lemma~\ref{Lem1}. Moreover, while the S-lemma is standard in control analysis, its use here is developed specifically for the stochastic data-driven setting through the construction of a data-conformity set based on empirical second-moment information obtained from multiple trajectory realizations.
\end{remark}

Having introduced the data-conformity constraint according to Lemma \ref{Lem1}, and following the second step discussed earlier in this section, we now proceed to propose our main data-driven result. We first specify the S-CBC and its controller as
\begin{align}\label{controller}
	\mathcal{B}(x) = x^\top  P x, \quad u = \mathcal K(x)x, 
\end{align}
where $P\succ 0$. By doing so, one can simplify the closed-loop form of system \eqref{Eq_Llike} as
\begin{align}\notag
	x^+ =\,&A \mathcal{F}(x) + B{\mathcal{G}}(x)u + \varsigma\\\notag
	\stackrel{(\ref{transform}),(\ref{controller})}{=}\,\!\!\!\!\!\!&  \,\,\,\,\,(A \mathcal{J}(x) + B{\mathcal{G}}(x)\mathcal K(x))x + \varsigma \\\label{Eq_CLs}
	=\,&  \Phi \Lambda(x)x + \varsigma,
\end{align}
with
\begin{align}\notag
	\Phi = [A~~~B], \quad\Lambda(x) = \left[\begin{array}{c} \mathcal{J}(x) \\
		\mathcal{G}(x)\mathcal  K(x)\end{array}\right]\!\!.
\end{align}
We are now ready to propose the main result of this work.

\begin{theorem}[\textbf{Data-Driven Design of S-CBC}] \label{T_final}
	Given an unknown dt-SNPS $\Upsilon$ as in Definition \ref{Def_1}, let Assumptions~\ref{Assum1}-\ref{Assum3} hold. Suppose there exist $\bar\eta, \bar\delta,  \in \mathbb{R}^+$, with $\bar\eta > \bar\delta$, $\kappa \in (0,1)$, matrices $\bar P\succ 0$ and $\bar {\mathcal  K}(x)$, and  $\alpha_{j=1,\dots,T}(x)\!\!:\mathbb{R}^n \to  \mathbb{R}_0^+$, such that
		\begin{subequations}\label{Eq_CBC_theorem}
		\begin{align}\label{Eq_CBC1_theorem}
			&\bar P - \bar\eta z_\eta z_\eta^\top ~\!\succeq 0,\\\notag
			&\text{with}~~X_\eta \subseteq \{x\in \mathbb{R}^n\!\!:~ xx^\top \preceq z_\eta z_\eta^\top\!, ~z_\eta \in \mathbb{R}^{n \times n}\},\\\label{Eq_CBC2_theorem}
			&\bar P  - \bar\delta z_\delta z_\delta^\top \preceq 0, \\\notag
			&\text{with}~~X_\delta \subseteq \{x\in \mathbb{R}^n\!\!: ~ xx^\top\succeq z_\delta z_\delta^\top\!, ~z_\delta \in \mathbb{R}^{n \times n}\},
			\end{align}
			\begin{align}\notag
			& 
			\begin{bmatrix} -\kappa \bar P  && 0  && 0\\ * &&  0 && \left[\begin{array}{c} \mathcal{J}(x)\bar P  \\
					\mathcal{G}(x)\bar{\mathcal  K}(x)\end{array}\right] \\
				* &&  * &&  -(1+\rho)^{-1}\bar P \end{bmatrix} \\\label{Eq_CBC3_theorem} 
			& ~~- \sum_{j=1}^{T}\alpha_j(x)\begin{bmatrix}\mathcal R^{DC_j} && 0  \\ * &&  0 \\
			\end{bmatrix}  \preceq 0,   \quad \quad \forall x \in X,
		\end{align}
	\end{subequations}
	with $\mathcal R^{DC_j}, j\in\{1,\dots,T\}$, as in \eqref{Eq_PI_new1} 
	for  some $\rho \in \mathbb{R}^+$. Then, $\mathcal{B}(x) = x^\top P x$, with $P = \bar P^{-1}$, is an S-CBC  for the  dt-SNPS and $u = \mathcal K(x)x$, with $\mathcal K(x) = \mathcal  {\bar K}(x) \bar P^{-1} =  \mathcal  {\bar K}(x)P$, is its corresponding safety controller with $\eta = \bar\eta^{-1}$, $\delta = \bar\delta^{-1}$ (where $\eta < \delta$), and
	\begin{align}\label{qqq}
		\psi = (1+\rho^{-1}) \operatorname{Tr}(P \Gamma_\mu) + \operatorname{Tr}(P\Gamma_\Sigma),
	\end{align}
	with a confidence of at least $1 - \beta_2$, where $\beta_2 = T \bar \beta_2$, with $\bar \beta_2$ as in \eqref{nd6sh}.
\end{theorem}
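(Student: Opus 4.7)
The plan is to verify the three S-CBC conditions in Definition~\ref{S-CBC def} for the candidate $\mathcal{B}(x) = x^\top P x$ with $P = \bar P^{-1}$ and the controller $u = \mathcal K(x) x$ with $\mathcal K(x) = \bar{\mathcal K}(x) P$, and then aggregate the per-time-step data-conformity events from Lemma~\ref{Lem1} via a union bound to obtain the stated confidence $1-\beta_2$. Conditions \eqref{Eq_CBC1} and \eqref{Eq_CBC2} will reduce to routine Schur-complement arguments; the heart of the proof is to use~\eqref{Eq_CBC3_theorem} to control the expectation in~\eqref{Eq_CBC3} via a matrix S-procedure.

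\textbf{Initial/unsafe conditions and reduction of the expectation.} For~\eqref{Eq_CBC1}, a Schur complement applied to $\bar P \succeq \bar\eta\, z_\eta z_\eta^\top$ yields $z_\eta^\top P z_\eta \leq \bar\eta^{-1} = \eta$; then, since $xx^\top \preceq z_\eta z_\eta^\top$ for $x\in X_\eta$, taking the trace against $P$ gives $x^\top P x = \operatorname{Tr}(P\,xx^\top) \leq \operatorname{Tr}(P\,z_\eta z_\eta^\top) = z_\eta^\top P z_\eta \leq \eta$. The argument for~\eqref{Eq_CBC2} is fully analogous with reversed inequalities, and $\bar\eta > \bar\delta$ translates to $\eta < \delta$ as required. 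For~\eqref{Eq_CBC3}, substituting the closed-loop form~\eqref{Eq_CLs} into $\mathbb E[\mathcal B(x^+)\mid x,u]$ and expanding produces a quadratic term in $\Phi$, a linear cross term with $\mu$, and a trace term with $\mathbb E[\varsigma\varsigma^\top]$. Young's inequality $2a^\top b \le \rho\,a^\top a + \rho^{-1} b^\top b$, applied with $a = P^{1/2}\Phi\Lambda(x)x$ and $b = P^{1/2}\mu$, absorbs the cross term into a $(1+\rho)$-multiple of the quadratic term plus the residual $\rho^{-1}\mu^\top P \mu$. Combining this residual with $\mathbb E[\varsigma\varsigma^\top] = \Sigma + \mu\mu^\top \preceq \Gamma_\Sigma + \Gamma_\mu$ (and the cyclic property of trace) yields exactly the $\psi$ of~\eqref{qqq}. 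Thus~\eqref{Eq_CBC3} reduces to the deterministic matrix inequality $(1+\rho)\,x^\top \Lambda(x)^\top \Phi^\top P \Phi \Lambda(x)\,x \leq \kappa\,x^\top P x$, which must hold for all $x\in X$ and every $\Phi$ consistent with the data.

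\textbf{Matrix S-procedure and probabilistic aggregation (main obstacle).} Under the change of variables $\bar P = P^{-1}$ and $\bar{\mathcal K}(x) = \mathcal K(x)\bar P$ (which simplifies $\Lambda(x)\bar P$ to $[\mathcal J(x)\bar P;\, \mathcal G(x)\bar{\mathcal K}(x)]$), the reduced inequality is, via a Schur complement on the block $-(1+\rho)^{-1}\bar P$, equivalent to a $2\times 2$ LMI in $\bar P$ whose off-diagonal block is $\Phi\,\Lambda(x)\bar P$. Since $\Phi$ is unknown, this LMI must hold for every $\Phi$ consistent with the collected data; the data-conformity constraints of Lemma~\ref{Lem1} furnish precisely such a quadratic-in-$\Phi$ family of matrix inequalities, encoded by the blocks $\mathcal R^{DC_j}$ of~\eqref{Eq_PI_new1} at time steps $j=1,\dots,T$. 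I would then invoke a matrix S-procedure: pre- and post-multiplying the $3\times 3$ matrix in~\eqref{Eq_CBC3_theorem} by a block test matrix that injects $\Phi^\top$ through its middle block reproduces the target $2\times 2$ LMI on the one hand and the data-consistency constraints on the other, so that the presence of nonnegative SOS multipliers $\alpha_j(x)\geq 0$ in~\eqref{Eq_CBC3_theorem} makes the target LMI a consequence of the constraints. The hardest step is verifying this S-procedure argument: one must track carefully that the zero middle block in~\eqref{Eq_CBC3_theorem} allows $\Phi$ to be absorbed into the test vector, that the multipliers act on the intended subblock, and that the sign conventions together with the $(1+\rho)$ scaling render the implication \emph{sufficient} and not merely necessary. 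The proof concludes with a union bound over the $T$ time steps---each contributing failure probability at most $\bar\beta_2$---yielding simultaneous validity of all $\mathcal R^{DC_j}$ with probability $\ge 1 - T\bar\beta_2 = 1-\beta_2$, as claimed.
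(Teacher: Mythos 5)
Your proposal is correct and follows essentially the same route as the paper's proof: Young's inequality on the noise cross term to produce the $(1+\rho)$ factor and the $\psi$ in~\eqref{qqq}, reduction of~\eqref{Eq_CBC3} to a quadratic matrix inequality in the unknown $\Phi$, the data-conformity QMIs $\mathcal R^{DC_j}$ combined via the matrix S-lemma with multipliers $\alpha_j(x)$, a Schur-complement dilation to resolve the $P^{-1}$--$\mathcal K(x)$ bilinearity into~\eqref{Eq_CBC3_theorem}, and a union bound over the $T$ time steps giving confidence $1-T\bar\beta_2$. The Schur-complement/trace treatment of the initial and unsafe level-set conditions likewise matches the paper's argument.
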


The proof of Theorem~\ref{T_final} in provided in the Appendix~\ref{app: proofs}.

\begin{remark}[\textbf{Barrier and Controller Structure}]~
		The quadratic S-CBC and the nonlinear state-feedback controller structure in~\eqref{controller} are mainly adopted for tractability and to derive the closed-form matrix inequality condition in~\eqref{Eq_CBC3_theorem} via the S-lemma. While higher-order polynomial barrier certificates can also be considered, they generally lead to significantly more complex SOS conditions. It is worth noting that the present framework is formulated as a feasibility problem, as is common in barrier certificate analysis. Nevertheless, in the stochastic setting, optimization objectives such as maximizing the achievable probabilistic safety guarantee in~\eqref{Eq:9} (equivalently, minimizing $\beta_1$) may also be pursued. This, however, introduces bilinear couplings among the design parameters, often necessitating that some of them be fixed \emph{a priori} to maintain computational tractability.
\end{remark}

\begin{figure*}[t!]
	\rule{\textwidth}{0.1pt}
	\begin{align}\label{Eq_PI_new1}
		\mathcal R^{DC_j} \!=\!\begin{bmatrix}
			\frac{1}{N} \sum_{i=1}^N\!\!\Xright^i_j  \Xright_j^{i\top} \!-\! \big(  \Gamma_\Sigma \!+\! \Gamma_\mu \!+\!\epsilon\mathds I_{n} \big) &&&  -\frac{1}{N} \sum_{i=1}^N\Xright^i_j \mathbb{H}_j^{i\top}\\
			* &&& ~~\frac{1}{N} \sum_{i=1}^N\mathbb{H}^i_j  \mathbb{H}_j^{i\top}\end{bmatrix}\!\!, \,\, \text{where }\,\, \mathbb{H}_j^i   \!=\! \left[\begin{array}{c} \mathbb{F}_j^i   \\
			{\mathbb{G}}_j^i   \end{array}\right]
	\end{align}
	\rule{\textwidth}{0.1pt}
\end{figure*}

We highlight that the key contribution of the paper lies in the development of a direct data-driven stochastic safety synthesis framework for unknown nonlinear systems using noisy trajectory realizations and S-CBC. Unlike most existing data-driven safety and control approaches based on worst-case robust formulations~\cite{de2019formulas,van2020noisy}, the proposed SOS-based framework directly constructs stochastic safety synthesis conditions from trajectory data through the proposed data-conformity formulation and explicitly incorporates empirical second-moment information of the process noise in~\eqref{Eq_CBC3_theorem}, thereby enabling probabilistic safety guarantees for noise distributions with potentially unbounded support.

Algorithm~\ref{alg} summarizes the required steps for the data-driven trajectory-based design of S-CBC and its corresponding safety controller.

\begin{remark}[\textbf{Computational Complexity Analysis}]~
	While SOS-based methods provide a systematic computational framework for polynomial control analysis and synthesis, it is well known that they may face scalability challenges for high-dimensional systems. In the proposed framework, the primary computational burden arises from solving the SOS condition in~\eqref{Eq_CBC3_theorem}, which is reformulated into a semidefinite program (SDP) by parsers such as SOSTOOLS. The overall computational complexity is primarily driven by the number of active decision variables. Let $r_1$ and $r_2$ denote the maximum polynomial degrees appearing in the polynomial controller matrix $\bar{\mathcal{K}}(x)$ and the multipliers $\alpha_j(x)$, respectively. The total number of explicit scalar decision variables, $v$, is given by
	\[
	v = mn\binom{n+r_1}{r_1}
	+
	T\binom{n+r_2}{r_2}
	+
	\frac{n(n+1)}{2}.
	\]
	When solved using standard primal-dual interior-point methods, both the \emph{time} and \emph{space} complexity generally scale polynomially with the number of decision variables $v$, with time scaling \emph{cubically} (\emph{i.e.,} $\mathcal{O}(v^3)$) and space scaling \emph{quadratically}  (\emph{i.e.,} $\mathcal{O}(v^2)$).
\end{remark}

\begin{algorithm}[t!] 
	\caption{Data-driven trajectory-based design of S-CBC and its safety controller with process noise}\label{alg}
	\begin{algorithmic}[1]\label{Algo}
		\REQUIRE  Safety specification $\mathbb{S} = (X_\eta, X_\delta, \mathcal{T})$, extended dictionaries $\mathcal{F}(x), \mathcal{G}(x)$, and bounds $\Gamma_\mu,\Gamma_\Sigma$
		\STATE Collect $\Xright^i,\mathbb{X}^i,$ and $ \mathbb{U}$, as in~\eqref{Eq_ST}, where $i \!\in\!\{1,2,\dots,N\}$\vspace{-0.4cm}
		\STATE Construct $\mathbb{F}^i$ and ${\mathbb{G}}^i$ as in \eqref{Eq_mandq}
		\STATE Initialize the desired $\epsilon$ and $N$, and compute $\bar\beta_2$ in Lemma~\ref{lemma:empirical-moment-bound}\vspace{-0.4cm}
		\STATE Initialize $\kappa \in (0,1)$ and $\rho \in \mathbb{R}^+$ in~\eqref{Eq_CBC_theorem} 
		\STATE Solve~\eqref{Eq_CBC_theorem} using \textsf{SeDuMi} and \textsf{SOSTOOLS}~\cite{prajna2004sostools} according to Theorem \ref{T_final} and compute $P = \bar P^{-1}$, $\mathcal K(x) = \mathcal  {\bar K}(x) \bar P^{-1} =  \mathcal  {\bar K}(x)P$, $\eta = \bar\eta^{-1}$, and $\delta = \bar\delta^{-1}$
		\STATE Compute $\psi$ according to \eqref{qqq}
		\STATE Quantify $\beta_1$ based on \eqref{Eq:9} and $\beta_2 = T \bar\beta_2$
		\ENSURE  S-CBC~$\mathcal{B}(x) = x^\top P x$, safety controller $u= \mathcal K(x) x$, and guaranteed probabilistic safety
		\begin{align*}
			\PP\Big\{\PP\big\{\Upsilon\vDash \mathbb{S}\big\}\geq 1- \beta_1\Big\}\geq 1-\beta_2
		\end{align*}\vspace{-0.4cm}
	\end{algorithmic}
\end{algorithm}

\section{Discussion}\label{Diss}

This section offers a broader perspective on the proposed approach, drawing on both the theoretical foundations established in this work and practical insights gained during its development and implementation.

{\bf Stochastic vs. Robust Analysis.} If one discards the noise distribution and instead performs robust analysis with respect to a bounded disturbance set, the resulting conditions can become overly conservative. Specifically, assume that \( \varsigma \) is a bounded disturbance with $\vert \varsigma\vert \leq \varkappa$, equivalently \( \varsigma \varsigma^\top \leq \varkappa^2 \mathds{I}_n \), without considering any probability distribution. Then, by applying the same data-conformity reasoning, the second term in the first entry of \( \mathcal{R}^{DC_j} \) in~\eqref{Eq_PI_new1} (\emph{i.e.,} $\Gamma_\Sigma \!+\! \Gamma_\mu \!+\!\epsilon\mathds I_{n}$) takes the form \( \varkappa^2 \mathds{I}_n \)~\cite{de2019formulas,van2020noisy}, rather than being expressed in terms of the noise parameters. This bound is larger than the quantity computed under distributional assumptions, making the matrix inequality in~\eqref{Eq_CBC3_theorem} more conservative and harder to satisfy. For instance, if $ \varsigma \in [-0.2, 0.2]^n$, then \( \varkappa^2 = 0.04n \). In contrast, if $\varsigma$ follows a uniform distribution as \( \varsigma \sim \mathcal{U}(\underbrace{-0.2}_a, \underbrace{0.2}_b)^n \), the second moment becomes 
\begin{align*}
\mathbb{E}[\varsigma \varsigma^\top] = \Sigma + \mu \mu^\top &= \frac{(b-a)^2}{12} \mathds{I}_n   + (\frac{a+b}{2})^2 \mathbb{1}_n \mathbb{1}_n^\top\\
&= \frac{0.16}{12} \mathds{I}_n \approx 0.0133 \mathds{I}_n,
\end{align*}
which is, even for $n=1$, approximately \emph{one-third} of the worst-case bound \( 0.04 \)   (\emph{i.e.,} $66\%$ less conservative element-wise), thereby facilitating the satisfaction of the matrix inequality in~\eqref{Eq_CBC3_theorem}. In the case study section, it is demonstrated that a robust approach fails to yield a safety controller for a physical example under bounded disturbances due to the conservatism inherent in worst-case analysis. In contrast, the proposed framework successfully synthesizes a safety controller that satisfies the safety constraints with high probability.

Similar benefits hold when assuming other noise distributions, \emph{e.g.,} normal or exponential. These distributions not only lead to \emph{smaller} expected second moments than conservative worst-case bounds, but they also offer a more realistic representation of real-world noise characteristics, primarily due to their \emph{unbounded} support.

\begin{remark}[\textbf{Conservatism of Moment Bounds}]\label{new_rem}~
	The conservatism of the proposed framework is directly influenced by the upper bounds $\Gamma_\mu$ and $\Gamma_\Sigma$ on the mean and covariance of the noise. In particular, larger values of these bounds make the matrix inequality condition in~\eqref{Eq_CBC3_theorem}  more restrictive and therefore reduce the feasibility region of the synthesis problem. Moreover, since the quantity $\Psi$ in~\eqref{qqq} also depends on $\Gamma_\mu$ and $\Gamma_\Sigma$, conservative moment bounds may decrease the achievable safety probability (\emph{i.e.,} $1-\beta_1$) in~\eqref{Eq:9}. They also reduce the level of confidence (\emph{i.e.,} $1-\beta_2$), as they appear in $\bar\beta_2$ in~\eqref{nd6sh}. Therefore, tighter knowledge of the noise moments generally leads to less conservative probabilistic safety guarantees.
\end{remark}

{\bf Trade-off Between $\epsilon$ and $\bar\beta_2$.} Lemma \ref{lemma:empirical-moment-bound} establishes a closed-form relationship between $\bar\beta_2$, $\epsilon$, and the number of empirical realizations $N$. As indicated by the proposed bound, for a fixed sample size $N$, the probability term $\bar\beta_2$ decreases as the tolerance parameter $\epsilon$ increases. This illustrates a fundamental trade-off: demanding higher accuracy (\emph{i.e.,} smaller $\epsilon$) reduces the confidence level ($1 - \bar\beta_2$), whereas accepting a larger deviation allows for a stronger probabilistic guarantee. This closed-form relation also gives a lower bound for $N$ based on $\epsilon$ and $\bar\beta_2$ as
\begin{align*}
N \geq& \frac{1}{\bar\beta_2\epsilon^2}  \big(  \operatorname{Tr}(\Gamma_\Sigma^2) + (\operatorname{Tr}(\Gamma_\Sigma))^2 + 2 \lambda_{\max}(\Gamma_\Sigma)\operatorname{Tr}(\Gamma_\mu) \\
& + 2\operatorname{Tr}(\Gamma_\Sigma)\operatorname{Tr}(\Gamma_\mu)\big).
\end{align*}
We note that $\bar{\beta}_2$ is intended to satisfy $0<\bar{\beta}_2<1$. If the resulting bound exceeds $1$ (\emph{e.g.,} due to insufficient data or overly conservative moment bounds), the concentration inequality becomes vacuous and no meaningful probabilistic guarantee can be inferred. As expected from standard concentration-bound analysis, $\bar{\beta}_2$ decreases as the number of trajectory realizations increases.

\begin{remark}[\textbf{On Infinitely Many Trajectories}]~
	As the number of trajectory realizations tends to infinity, \emph{i.e.,} $N\to\infty$, the confidence bound $\bar{\beta}_2$ in Lemma~ \ref{lemma:empirical-moment-bound} converges to zero, and the empirical approximation of $\mathbb{E}[\varsigma\varsigma^\top]$ converges to its true value with confidence approaching one. Consequently, the outer probabilistic layer associated with data uncertainty vanishes asymptotically. However, the probabilistic safety level $\beta_1$ in~\eqref{Eq:9} remains unchanged, as it reflects the intrinsic stochasticity induced by the process noise and is independent of the amount of available data.
\end{remark}

{\bf Multiple Noise Realizations.} While $N$ trajectories should be collected in \eqref{Eq_ST} to capture the stochastic behavior of the dt-SNPS, the proposed matrix inequality condition in \eqref{Eq_CBC3_theorem} needs to be solved only once. This is achieved by leveraging the empirical average of the collected trajectories, as embedded in the data-conformity matrix $\mathcal{R}^{DC_j}$ in \eqref{Eq_PI_new1}. As a result, the {stochastic nature} of the system is captured through data, without the need to solve multiple matrix inequalities. This design significantly reduces computational overhead, as the matrix inequality condition in \eqref{Eq_CBC3_theorem} is evaluated just once, and the averaging step does not introduce additional complexity in the optimization process.

{\bf Infinite-Horizon Guarantees.} If the stochasticity in~\eqref{Eq_BBox} is multiplicative \emph{i.e.,} $ \varsigma  \odot x$, with $\odot$ being the \emph{Hadamard product} (element-wise multiplication), then the constant $\psi$ in~\eqref{Eq_CBC3} could be zero. Accordingly, the S-CBC $\mathcal B$ satisfying condition~\eqref{Eq_CBC3} with $\psi = 0$ is \emph{non-negative supermartingale}~\cite[Chapter I]{1967stochastic}. In this case, one can employ Definition~\ref{S-CBC def} with $\psi = 0$ and provide an upper bound on the probability that all trajectories of dt-SNPS do not reach unsafe regions within \emph{infinite} time horizons as
	\begin{equation}\label{Eq:10}
		\mathbb P \Big\{{\mathbf{x}}\notin{X}_\delta ~\text{for all}~ k\!\in\![0, \infty)  \,~ \big|~ x_0\Big\} \geq 1-\beta_1,
	\end{equation}
	where $\beta_1 = \frac{\eta}{\delta}$. We note that requiring an S-CBC $\mathcal B$ with $\psi = 0$ is potentially more restrictive than the conditions in Theorem~\ref{Lemma imp}, but offers the advantage of enabling probabilistic guarantees over {infinite} time horizons.
	
	\begin{table*}[t!]
		\centering
		\caption{Overview of data-driven results across three stochastic control systems with unknown dynamics.  For each system, runtime for collecting data (\texttt{RT1}), runtime for solving conditions~\eqref{Eq_CBC_theorem} (\texttt{RT2}), and memory usage (\texttt{MU}) are reported.\label{tab:benchmark}}
		\begin{tabular}{@{}ccccccccccccc@{}}
				\toprule
				\cmidrule(lr){3-5} \cmidrule(lr){6-7} \vspace{-0.25cm}\\
				{System} & $N$ & $T$ & $\mathcal T$  & $\epsilon$  & $\Gamma_\mu$ & $\Gamma_\Sigma$&  $\beta_1$ & $\bar\beta_2$& $\beta_2$& \makecell{\texttt{RT1}\\(sec)}& \makecell{\texttt{RT2}\\(sec)}& \makecell{\texttt{MU}\\(Mb)}\\
				\midrule
				\myalign{l}{Lorenz} & $77$   & $10$  & $100$   & $0.1$  & $ \boldsymbol{0}_3 $ & $ 0.006\, \mathds{I}_3$ & $0.08$  & $5 \times 10^{-4}$  & $0.005$  & $0.01$ & $1.58$ & $13.54$ \\
				\midrule
				\myalign{l}{Chen}  & $328$  & $7$  &   $100$ &  $0.2$ & $\boldsymbol{0}_3$ & $ 0.008\, \mathds {I}_3$ &  $0.05$ & $5 \times 10^{-5}$  &  $0.00035$ & $0.01$ & $0.74$ & $8.84$ \\
				\midrule
				\myalign{l}{Spacecraft} & $1283$  & $8$  &   $20$ &  $0.01$ & $\boldsymbol{0}_3$ & $0.0075\, \mathds {I}_3$ &  $0.07$ & $0.005$  &  $0.04$ & $0.03$ & $1.37$ & $12.73$ \\
				\bottomrule
		\end{tabular}
	\end{table*}

{\bf Stability Analysis.}
While the data-driven results of this work are presented in the context of {safety} analysis for stochastic control systems, the proposed approach can be utilized to assess other essential system properties, such as \emph{mean-square stability}, as its analysis typically relies on a condition involving the expected value of a Lyapunov function, closely resembling the structure of the condition in \eqref{Eq_CBC3}.

{\bf Limitations.} Similar to any approach, our findings have also some limitations that are worth mentioning. The primary issue lies in the class of nonlinear systems considered, which is restricted to polynomial dynamics. While many real-world engineering systems can be effectively modeled by polynomial nonlinearities (cf. benchmark case studies), extending the approach to handle more general nonlinear systems (\emph{e.g.,} involving $\sin, \cos$) is a direction for future work. It is worth noting that this restriction stems from the SOS nature of our condition~\eqref{Eq_CBC3_theorem}, which requires the system dynamics to be polynomial. Another limitation concerns the choice of the S-CBC, which is \emph{quadratic} in our current setting (\emph{i.e.} $\mathcal{B}(x) = x^\top  P x$). While a quadratic S-CBC is sufficient to ensure safety in many nonlinear systems, the proposed setting can be generalized to higher-order polynomials in future work, for instance using $\mathcal{B}(x) = \mathcal{F}(x)^\top P \mathcal{F}(x)$.

\section{Simulation Results}\label{Section: Simulation}
In this section, we demonstrate the effectiveness of our data-driven stochastic approach through its application on three benchmarks, including chaotic Lorenz-type systems (called Lorenz and Chen)~\cite{strogatz2018nonlinear} and a spacecraft~\cite{khalil2002control}. A brief yet general overview of these case studies is provided in Table \ref{tab:benchmark}. All simulations are performed in \MATLAB \textsl{R2023b} on a MacBook Pro (Apple M2 Pro - 32GB memory). 

The primary objective across all benchmarks is to design an S-CBC and its corresponding safety controller for systems with an unknown mathematical model, \emph{i.e.,} unknown $A$ and $B$, and an arbitrary noise distribution. To achieve this, under Algorithm~\ref{alg}, we collect input-state trajectories under different noise realizations and construct an S-CBC and its associated safety controller that satisfy the proposed conditions in~\eqref{Eq_CBC_theorem}. Figures~\ref{fig:4} and~\ref{fig:5} illustrate the closed-loop trajectories of the Lorenz and Chen systems under the synthesized controller. Details of the underlying models, and designed matrices and controllers are provided in the Appendix~\ref{app: simulations}.

As shown in Table~\ref{tab:benchmark}, the Chen system requires a higher number of noise realizations ($N = 328$) than the Lorenz system. Nevertheless, the runtime required to collect this data (\texttt{RT1}) is $0.01$ seconds. In addition, the runtime for solving the conditions in~\eqref{Eq_CBC_theorem} (\texttt{RT2}) is $0.74$ seconds, demonstrating the high scalability of our approach. This computation time is lower than that of the Lorenz system due to the shorter horizon length ($T = 7$). In contrast to the Lorenz and Chen systems, the spacecraft system uses a smaller $\epsilon = 0.01$, which results in a higher required number of samples ($N = 1013$). However, this facilitates the satisfaction of condition~\eqref{Eq_CBC3_theorem}, as $\epsilon$ appears in the data-conformity matrix $\mathcal{R}^{DC_j}$ in~\eqref{Eq_PI_new1}.

\begin{figure*}[!t]
	\centering
	\begin{subfigure}[t]{0.32\textwidth}
		\includegraphics[width=\linewidth]{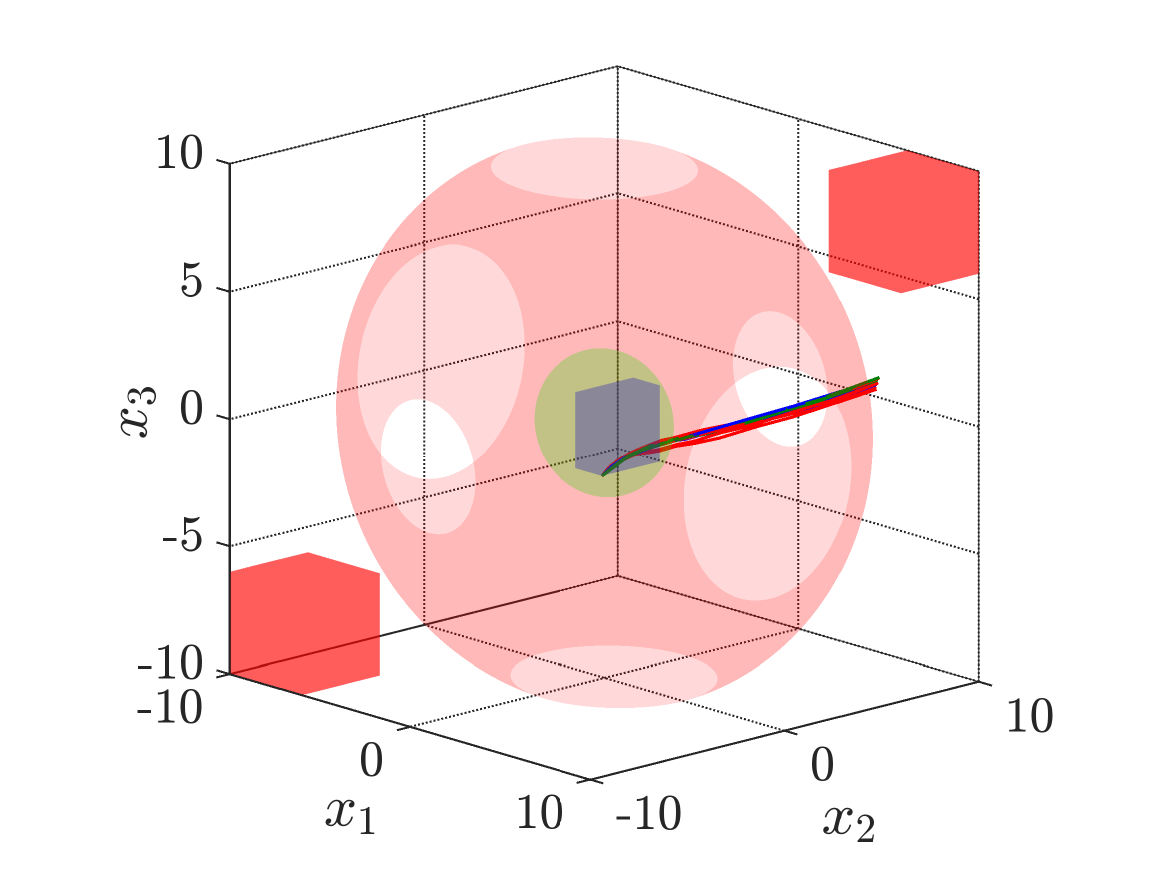}
	\end{subfigure}
	\hspace{-0.01cm}
	\begin{subfigure}[t]{0.32\textwidth}
		\includegraphics[width=\linewidth]{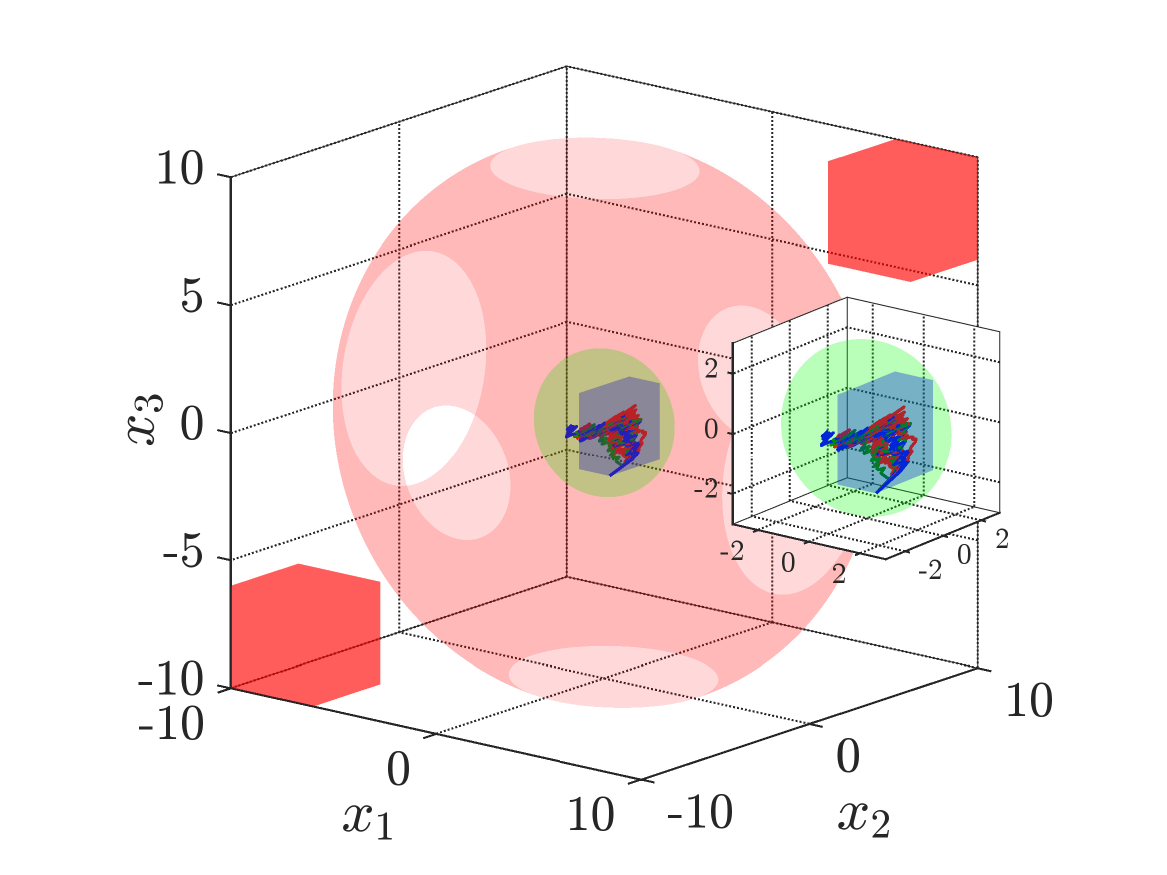}
	\end{subfigure}
	\hspace{-0.01cm}
	\begin{subfigure}[t]{0.32\textwidth}
		\includegraphics[width=\linewidth]{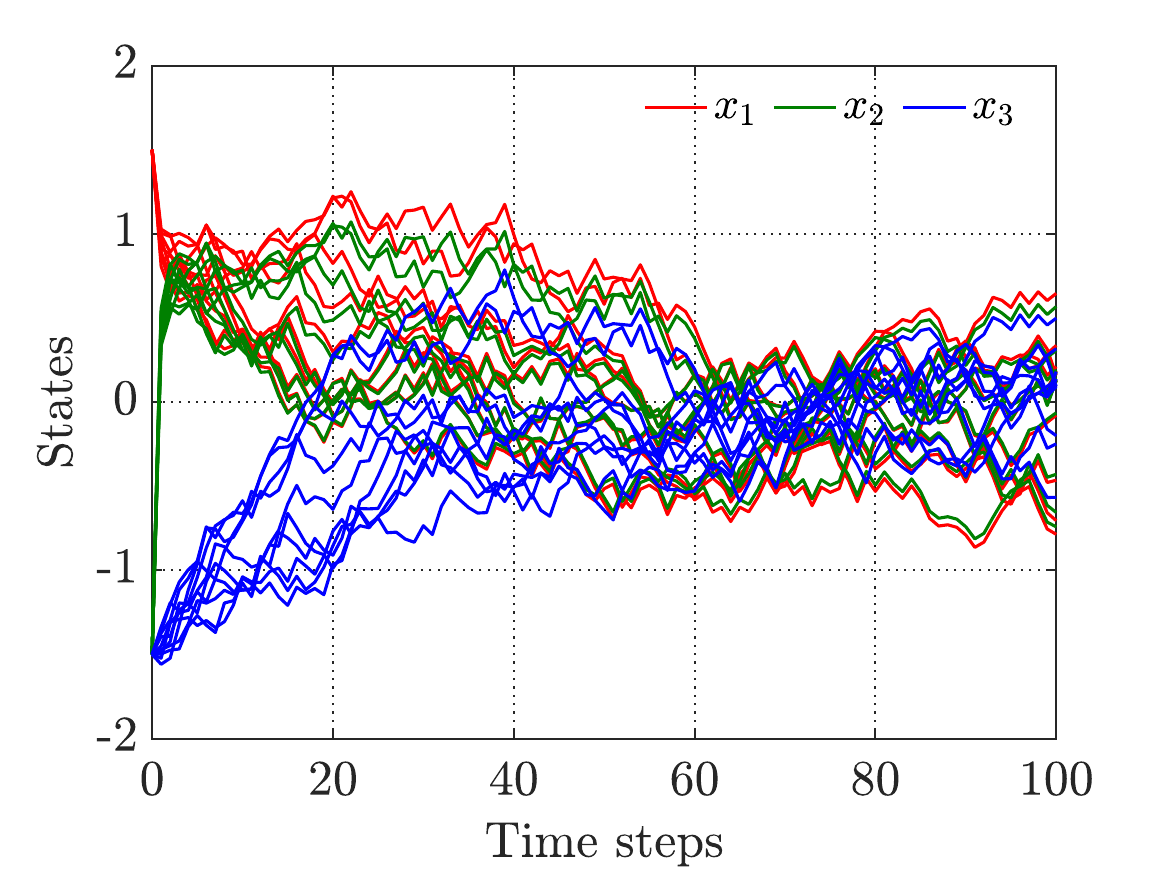}
	\end{subfigure}
	
	\caption{{\bf Lorenz:}  Closed-loop state trajectories under an ad hoc controller~(\textbf{left figure}), which violates the safety specification, and the designed data-driven controller in \eqref{sdhg}~(\textbf{middle figure}), shown together with a magnified inset. Initial and unsafe regions are depicted by purple \protect\bluesquare\ and red  \protect\redsquare\ boxes, while $\mathcal {B}(x) = \eta$ and $\mathcal {B}(x) = \delta$ are indicated by~\protect\greensquare\ and~\protect\pinksquare\,, respectively. The \textbf{right figure} indicates arbitrary trajectories under different noise realizations over a time horizon of $\mathcal{T}=100$, consistent with our safety guarantee, demonstrating the robustness of the framework.
	}
	\label{fig:4}
\end{figure*}

As another observation, while the probabilistic safety guarantees (\emph{i.e.,} $1 - \beta_1$) for the Lorenz and Chen systems are $92\%$ and $95\%$, respectively, all closed-loop trajectories under the designed controller, as shown in Figs.~\ref{fig:4} and~\ref{fig:5}, satisfy the safety specification. This indicates that the data-driven controller is somewhat conservative, which is expected due to the use of the S-CBC as a Lyapunov-like function. However, this conservatism comes with the benefit of providing formal safety guarantees, rather than relying solely on empirical safety validation. It is worth noting that smaller values of $\Gamma_\mu$ and $\Gamma_\Sigma$ generally lead to smaller values of $\beta_1$ and $\beta_2$, thereby improving both the probabilistic safety guarantee and the associated confidence level (cf. Remark~\ref{new_rem}). For instance, in the Lorenz example, if $\Gamma_\mu$ and $\Gamma_\Sigma$ are conservatively estimated as $\Gamma_\mu = 0.001 \mathds{I}3$ and $\Gamma_\Sigma = 0.01\mathds{I}3$, the corresponding value of $\beta_1$ increases to $0.1$, reducing the probabilistic safety guarantee from $92\%$ (reported in Table~\ref{tab:benchmark} as $1-\beta_1$) to $90\%$. Moreover, the corresponding value of $\beta_2$ increases to $0.01$, reducing the confidence level from $99.5\%$ (reported in Table~\ref{tab:benchmark} as $1-\beta_2$) to $99\%$. 

As the final simulation study, we consider the spacecraft system to compare our stochastic analysis with the robust methods proposed in the literature~\cite{de2019formulas,van2020noisy}. We consider the case where the disturbance $\varsigma$ is bounded, \emph{i.e.,} $|\varsigma| \leq \varkappa$, or equivalently $\varsigma \varsigma^\top \leq \varkappa^2 \mathds{I}_3$, without assuming any specific probability distribution. Under this assumption, and following the same data-conformity reasoning, the second term in the first entry of $\mathcal{R}^{DC_j}$ in~\eqref{Eq_PI_new1} is replaced by $\varkappa^2 \mathds{I}_3$~\cite{de2019formulas,van2020noisy}, rather than being expressed in terms of noise distribution parameters. By considering the disturbance bound as $[-0.15,   0.15]^3$, yielding $\varkappa^2 = 0.0225$, no feasible controller can be obtained using the robust analysis.

We then allow the disturbance to follow a uniform distribution and aim at solving the problem using our proposed stochastic framework. In this case, the problem is readily solved using the parameters reported in Table~\ref{tab:benchmark}, as the term $\Gamma_\Sigma + \Gamma_\mu + \epsilon \mathds{I}_3 = 0.0175\mathds{I}_3$ in $\mathcal{R}^{DC_j}$~\eqref{Eq_PI_new1} is approximately \emph{$22\%$ smaller in each entry} than the worst-case bound $\varkappa^2\mathds{I}_3 = 0.0225\mathds{I}_3$. This case study clearly demonstrates that if the robust worst-case approach fails to synthesize a safety controller under bounded disturbances due to its conservatism, our data-driven stochastic framework is capable of synthesizing a safety controller that satisfies the safety constraints with potentially high probability. This highlights that, despite introducing some probabilistic risk, stochastic analysis can be more effective and less conservative than robust analysis in many scenarios. For the sake of a fair comparison, it should be noted that when robust analysis is feasible, it can be more advantageous, as it ensures safety without introducing any risk to the system.

\begin{figure*}[!t]
	\centering
	\begin{subfigure}[t]{0.32\textwidth}
		\includegraphics[width=\linewidth]{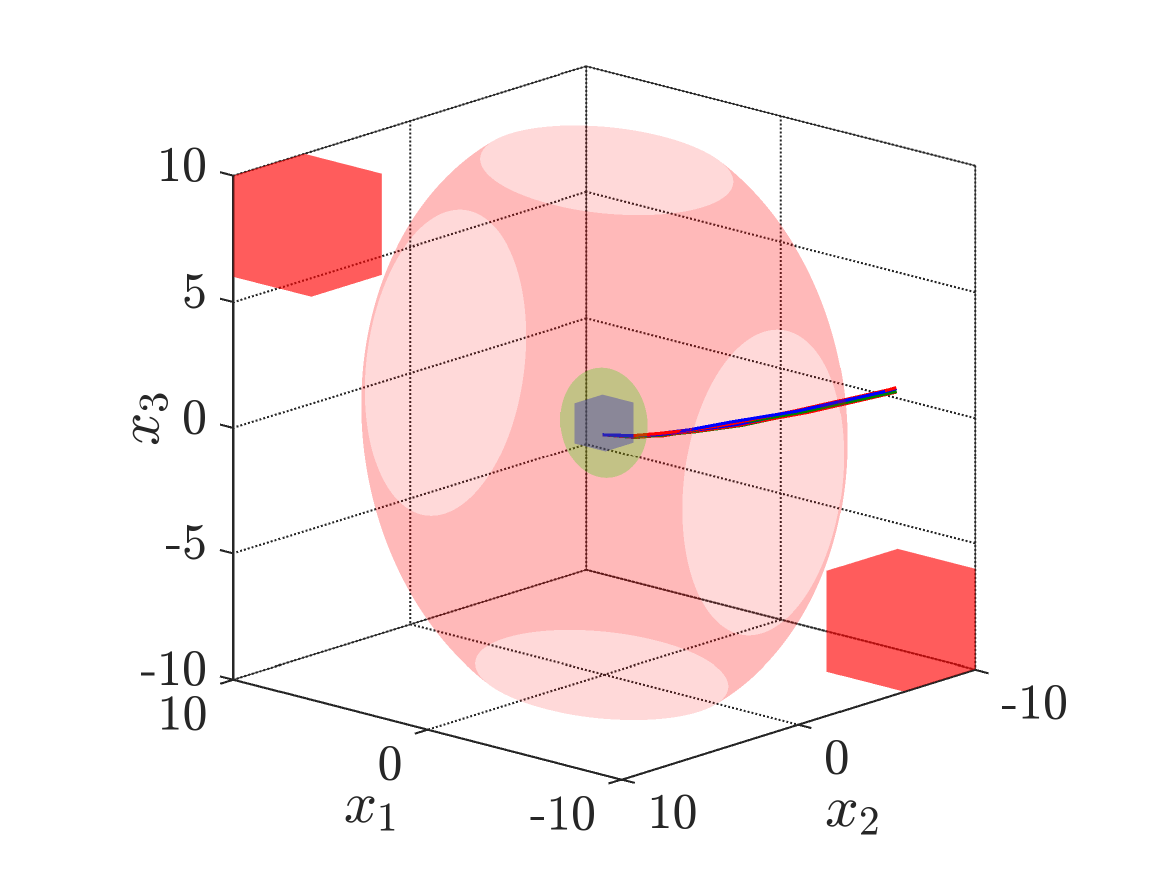}
	\end{subfigure}
	\hspace{-0.01cm}
	\begin{subfigure}[t]{0.32\textwidth}
		\includegraphics[width=\linewidth]{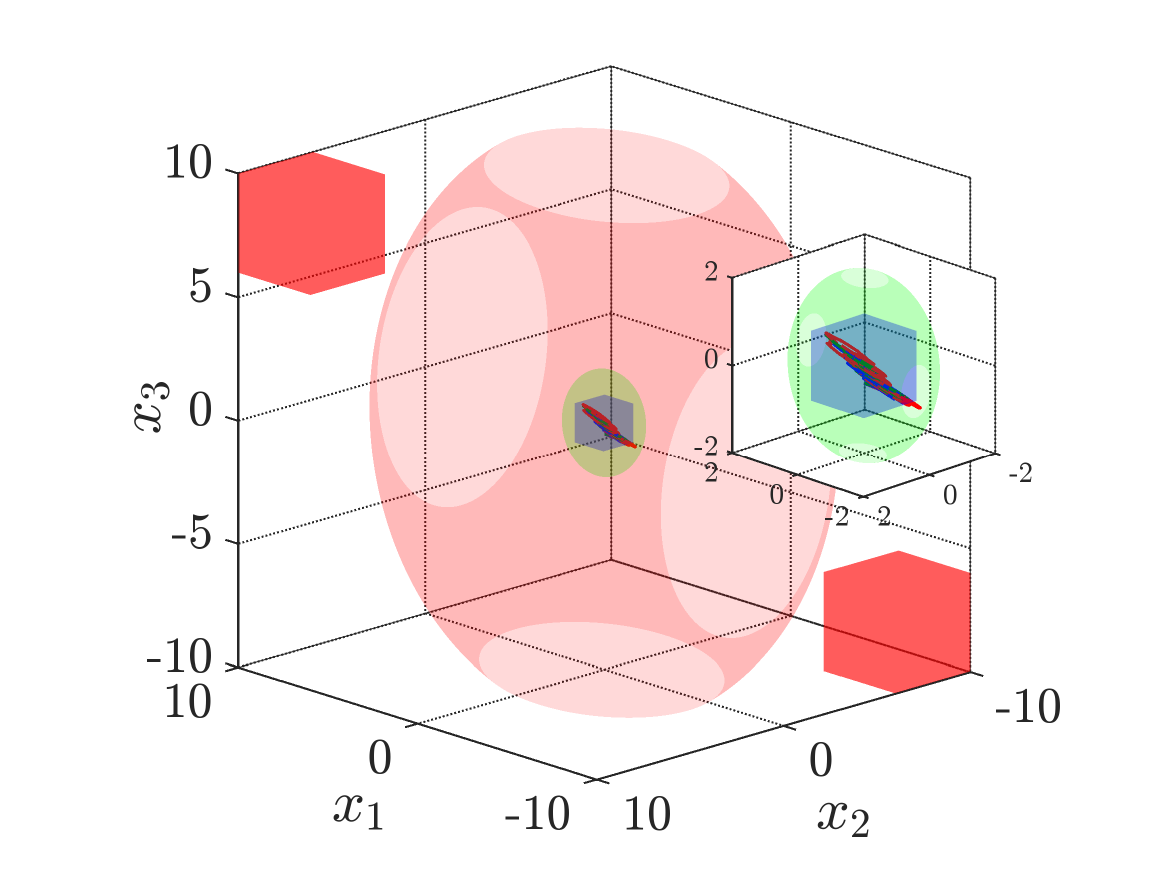}
	\end{subfigure}
	\hspace{-0.01cm}
	\begin{subfigure}[t]{0.32\textwidth}
		\includegraphics[width=\linewidth]{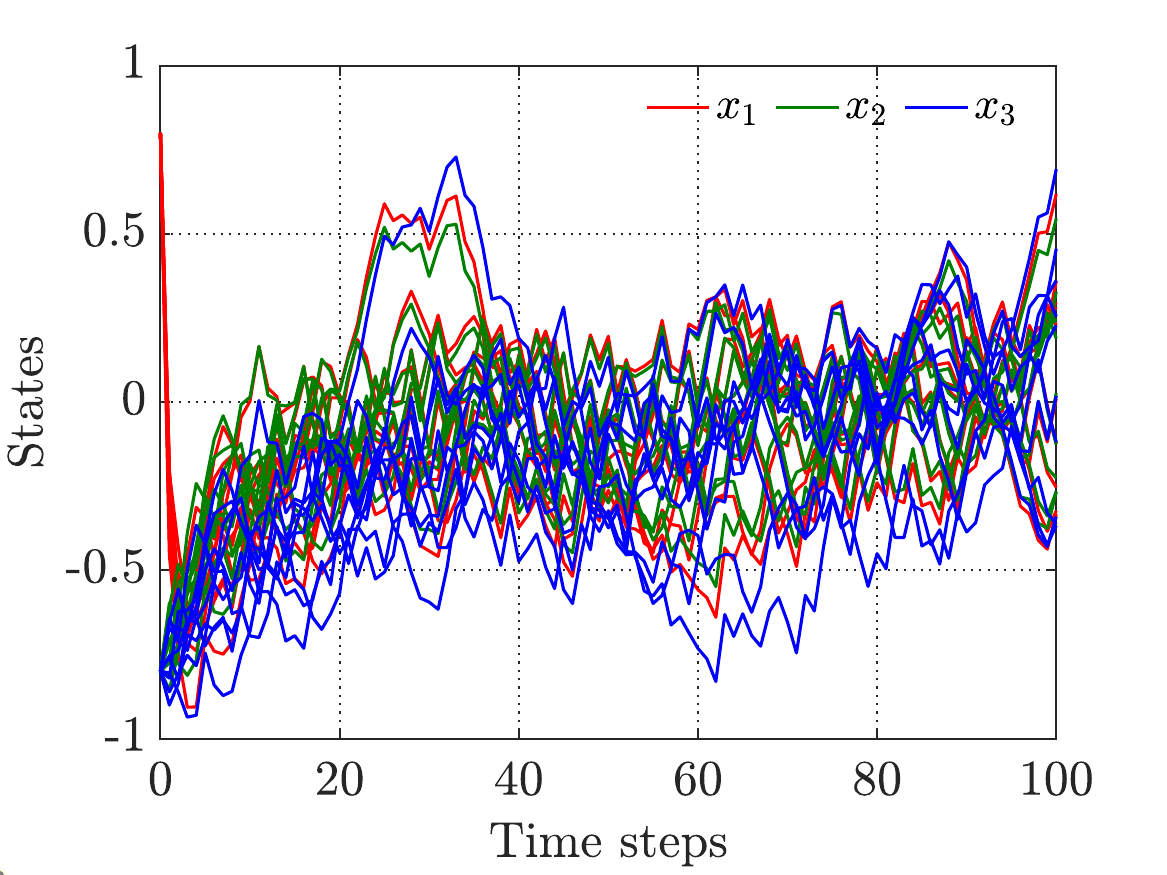}
		
	\end{subfigure}
	\caption{{\bf Chen:}  Closed-loop state trajectories under an ad hoc controller~(\textbf{left figure}), which violates the safety specification, and the designed data-driven controller in \eqref{hdjfh}~(\textbf{middle figure}) with magnified inset. Initial and unsafe regions are depicted by purple \protect\bluesquare\ and red  \protect\redsquare\ boxes, while $\mathcal {B}(x) = \eta$ and $\mathcal {B}(x) = \delta$ are indicated by~\protect\greensquare\ and~\protect\pinksquare\,, respectively. The \textbf{right figure} indicates arbitrary trajectories under different noise realizations over a time horizon of $\mathcal{T}=100$.}
	\label{fig:5}
\end{figure*}

\section{Conclusion}\label{Section: Conclusion}
This work developed a non-i.i.d. trajectory-based  framework for {stochastic} control systems with both {unknown} dynamics and arbitrary noise distributions. Prior work in this area has often considered dynamical systems subject to {bounded} disturbances, enabling robust analysis. Despite their appeal, such assumptions may not hold in real-world settings, and the corresponding {worst-case} analysis typically introduces significant conservatism. In contrast, this work addressed systems influenced by {process noise}, introducing a stochastic data-driven scheme which is capable of quantifying {probabilistic} safety guarantees for unknown models with a certified confidence level. The method relies on finite-horizon data collection from multiple noise realizations and employs stochastic control barrier certificates to derive probabilistic safety guarantees over finite horizons. The resulting conditions are cast as an SOS program, using only empirical average of trajectory data  and the statistical properties of the process noise. Ongoing work explores extensions to more general class of nonlinear stochastic systems with dynamics beyond polynomials.

\section{acknowledgment}
The author would like to thank Omid Akbarzadeh and MohammadHossein Ashoori for their assistance with the simulations in Section~\ref{Section: Simulation}.

\bibliographystyle{IEEEtran}
\bibliography{biblio}	

\begin{thebibliography}{10}
\providecommand{\url}[1]{#1}
\csname url@samestyle\endcsname
\providecommand{\newblock}{\relax}
\providecommand{\bibinfo}[2]{#2}
\providecommand{\BIBentrySTDinterwordspacing}{\spaceskip=0pt\relax}
\providecommand{\BIBentryALTinterwordstretchfactor}{4}
\providecommand{\BIBentryALTinterwordspacing}{\spaceskip=\fontdimen2\font plus
\BIBentryALTinterwordstretchfactor\fontdimen3\font minus
  \fontdimen4\font\relax}
\providecommand{\BIBforeignlanguage}[2]{{%
\expandafter\ifx\csname l@#1\endcsname\relax
\typeout{** WARNING: IEEEtran.bst: No hyphenation pattern has been}%
\typeout{** loaded for the language `#1'. Using the pattern for}%
\typeout{** the default language instead.}%
\else
\language=\csname l@#1\endcsname
\fi
#2}}
\providecommand{\BIBdecl}{\relax}
\BIBdecl

\bibitem{mcgregor2017analysis}
J.~McGregor, D.~Gluch, and P.~Feiler, ``Analysis and design of safety-critical,
  cyber-physical systems,'' \emph{ACM SIGAda Ada Letters}, vol.~36, no.~2, pp.
  31--38, 2017.

\bibitem{hou2013model}
Z.~S. Hou and Z.~Wang, ``From model-based control to data-driven control:
  Survey, classification and perspective,'' \emph{Information Sciences}, vol.
  235, pp. 3--35, 2013.

\bibitem{dorfler2022bridging}
F.~D{\"o}rfler, J.~Coulson, and I.~Markovsky, ``Bridging direct and indirect
  data-driven control formulations via regularizations and relaxations,''
  \emph{IEEE Transactions on Automatic Control}, vol.~68, no.~2, pp. 883--897,
  2022.

\bibitem{martin2023guarantees}
T.~Martin, T.~B. Sch{\"o}n, and F.~Allg{\"o}wer, ``Guarantees for data-driven
  control of nonlinear systems using semidefinite programming: A survey,''
  \emph{Annual Reviews in Control}, vol.~56, 2023.

\bibitem{markovsky2021behavioral}
I.~Markovsky and F.~D{\"o}rfler, ``Behavioral systems theory in data-driven
  analysis, signal processing, and control,'' \emph{Annual Reviews in Control},
  vol.~52, pp. 42--64, 2021.

\bibitem{moerland2023model}
T.~M. Moerland, J.~Broekens, A.~Plaat, C.~M. Jonker \emph{et~al.},
  ``Model-based reinforcement learning: A survey,'' \emph{Foundations and
  Trends in Machine Learning}, vol.~16, no.~1, pp. 1--118, 2023.

\bibitem{seeger2004gaussian}
M.~Seeger, ``Gaussian processes for machine learning,'' \emph{International
  journal of neural systems}, vol.~14, no.~2, pp. 69--106, 2004.

\bibitem{lederer2019uniform}
A.~Lederer, J.~Umlauft, and S.~Hirche, ``Uniform error bounds for gaussian
  process regression with application to safe control,'' \emph{Advances in
  Neural Information Processing Systems}, vol.~32, 2019.

\bibitem{ramirez2022model}
J.~Ram{\'\i}rez, W.~Yu, and A.~Perrusqu{\'\i}a, ``Model-free reinforcement
  learning from expert demonstrations: a survey,'' \emph{Artificial
  Intelligence Review}, vol.~55, no.~4, pp. 3213--3241, 2022.

\bibitem{hjalmarsson1998iterative}
H.~Hjalmarsson, M.~Gevers, S.~Gunnarsson, and O.~Lequin, ``Iterative feedback
  tuning: theory and applications,'' \emph{IEEE control systems magazine},
  vol.~18, no.~4, pp. 26--41, 1998.

\bibitem{campi2002virtual}
M.~Campi, A.~Lecchini, and S.~M. Savaresi, ``Virtual reference feedback tuning:
  a direct method for the design of feedback controllers,'' \emph{Automatica},
  vol.~38, no.~8, pp. 1337--1346, 2002.

\bibitem{calafiore2006scenario}
G.~C. Calafiore and M.~C. Campi, ``The scenario approach to robust control
  design,'' \emph{IEEE Transactions on automatic control}, vol.~51, no.~5, pp.
  742--753, 2006.

\bibitem{campi2009scenario}
M.~C. Campi, S.~Garatti, and M.~Prandini, ``The scenario approach for systems
  and control design,'' \emph{Annual Reviews in Control}, vol.~33, no.~2, pp.
  149--157, 2009.

\bibitem{de2019formulas}
C.~De~Persis and P.~Tesi, ``Formulas for data-driven control: Stabilization,
  optimality, and robustness,'' \emph{IEEE Transactions on Automatic Control},
  vol.~65, no.~3, pp. 909--924, 2019.

\bibitem{van2020noisy}
H.~J. Van~Waarde, M.~K. Camlibel, and M.~Mesbahi, ``From noisy data to feedback
  controllers: Nonconservative design via a matrix {S}-lemma,'' \emph{IEEE
  Transactions on Automatic Control}, vol.~67, no.~1, pp. 162--175, 2020.

\bibitem{ou2025stochastic}
R.~Ou, G.~Pan, and T.~Faulwasser, ``A stochastic fundamental lemma with reduced
  disturbance data requirements,'' \emph{arXiv: 2502.09131}, 2025.

\bibitem{tanaskovic2017data}
M.~Tanaskovic, L.~Fagiano, C.~Novara, and M.~Morari, ``Data-driven control of
  nonlinear systems: An on-line direct approach,'' \emph{Automatica}, vol.~75,
  pp. 1--10, 2017.

\bibitem{esfahani2014performance}
P.~Mohajerin~Esfahani, T.~Sutter, and J.~Lygeros, ``Performance bounds for the
  scenario approach and an extension to a class of non-convex programs,''
  \emph{IEEE Transactions on Automatic Control}, vol.~60, no.~1, pp. 46--58,
  2014.

\bibitem{margellos2014road}
K.~Margellos, P.~Goulart, and J.~Lygeros, ``On the road between robust
  optimization and the scenario approach for chance constrained optimization
  problems,'' \emph{IEEE Transactions on Automatic Control}, vol.~59, no.~8,
  pp. 2258--2263, 2014.

\bibitem{samari2024data}
B.~Samari, A.~Nejati, and A.~Lavaei, ``Data-driven control of large-scale
  networks with formal guarantees: A small-gain-free approach,'' \emph{IEEE
  Transactions on Automatic Control}, 2026.

\bibitem{zaker2025data}
M.~Zaker, A.~Nejati, and A.~Lavaei, ``Data-driven safety certificates of
  infinite networks with unknown models and interconnection topologies,''
  \emph{Automatica}, 2026.

\bibitem{berberich2020data}
J.~Berberich, J.~K{\"o}hler, M.~M{\"u}ller, and F.~Allg{\"o}wer, ``Data-driven
  model predictive control with stability and robustness guarantees,''
  \emph{IEEE Transactions on Automatic Control}, vol.~66, no.~4, pp.
  1702--1717, 2020.

\bibitem{monshizadeh2024meta}
N.~Monshizadeh, C.~De~Persis, and P.~Tesi, ``Meta results on data-driven
  control of nonlinear systems,'' \emph{arXiv:2405.10064}, 2024.

\bibitem{ashoori2025physics}
M.~Ashoori, A.~Aminzadeh, A.~Nejati, and A.~Lavaei, ``Physics-informed
  data-driven control of nonlinear polynomial systems with noisy data,''
  \emph{arXiv:2508.01315}, 2025.

\bibitem{willems2005note}
J.~C. Willems, P.~Rapisarda, I.~Markovsky, and B.~L. De~Moor, ``A note on
  persistency of excitation,'' \emph{Systems \& Control Letters}, vol.~54,
  no.~4, pp. 325--329, 2005.

\bibitem{martin2023gaussian}
T.~Martin, T.~B. Sch{\"o}n, and F.~Allg{\"o}wer, ``Gaussian inference for
  data-driven state-feedback design of nonlinear systems,''
  \emph{IFAC-PapersOnLine}, vol.~56, no.~2, pp. 4796--4803, 2023.

\bibitem{prajna2004safety}
S.~Prajna and A.~Jadbabaie, ``Safety verification of hybrid systems using
  barrier certificates,'' in \emph{International workshop on hybrid systems:
  Computation and control}, 2004, pp. 477--492.

\bibitem{prajna2007framework}
S.~Prajna, A.~Jadbabaie, and G.~J. Pappas, ``A framework for worst-case and
  stochastic safety verification using barrier certificates,'' \emph{IEEE
  Transactions on Automatic Control}, vol.~52, no.~8, pp. 1415--1428, 2007.

\bibitem{wieland2007constructive}
P.~Wieland and F.~Allg{\"o}wer, ``Constructive safety using control barrier
  functions,'' \emph{IFAC Proceedings Volumes}, vol.~40, no.~12, pp. 462--467,
  2007.

\bibitem{ames2019control}
A.~D. Ames, S.~Coogan, M.~Egerstedt, G.~Notomista, K.~Sreenath, and P.~Tabuada,
  ``Control barrier functions: Theory and applications,'' in \emph{European
  control conference (ECC)}, 2019, pp. 3420--3431.

\bibitem{xiao2023safe}
W.~Xiao, C.~G. Cassandras, and C.~Belta, \emph{Safe Autonomy with Control
  Barrier Functions: Theory and Applications}.\hskip 1em plus 0.5em minus
  0.4em\relax Springer, 2023.

\bibitem{ames2016control}
A.~D. Ames, X.~Xu, J.~W. Grizzle, and P.~Tabuada, ``Control barrier function
  based quadratic programs for safety critical systems,'' \emph{IEEE
  Transactions on Automatic Control}, vol.~62, no.~8, pp. 3861--3876, 2016.

\bibitem{clark2019control}
A.~Clark, ``Control barrier functions for complete and incomplete information
  stochastic systems,'' in \emph{American Control Conference (ACC)}, 2019, pp.
  2928--2935.

\bibitem{santoyo2021barrier}
C.~Santoyo, M.~Dutreix, and S.~Coogan, ``A barrier function approach to
  finite-time stochastic system verification and control,'' \emph{Automatica},
  vol. 125, 2021.

\bibitem{NEURIPS2020_barrier}
W.~Luo, W.~Sun, and A.~Kapoor, ``Multi-{R}obot {C}ollision {A}voidance under
  {U}ncertainty with {P}robabilistic {S}afety {B}arrier {C}ertificates,'' in
  \emph{Advances in {N}eural {I}nformation {P}rocessing {S}ystems}, vol.~33,
  2020, pp. 372--383.

\bibitem{lavaei2024scalable}
A.~Lavaei and E.~Frazzoli, ``Scalable synthesis of safety barrier certificates
  for networks of stochastic switched systems,'' \emph{IEEE Transactions on
  Automatic Control}, vol.~11, no.~69, pp. 7294--7309, 2024.

\bibitem{nejati2024context}
A.~Nejati, S.~Prakash~Nayak, and A.-K. Schmuck, ``Context-triggered games for
  reactive synthesis over stochastic systems via control barrier
  certificates,'' in \emph{Proceedings of the 27th ACM International Conference
  on Hybrid Systems: Computation and Control}, 2024, pp. 1--12.

\bibitem{zaker2024compositional}
M.~Zaker, O.~Akbarzadeh, B.~Samari, and A.~Lavaei, ``Compositional design of
  safety controllers for large-scale stochastic hybrid systems,''
  \emph{Automatica}, vol. 190, 2026.

\bibitem{nejati2022data}
A.~Nejati, B.~Zhong, M.~Caccamo, and M.~Zamani, ``Data-driven controller
  synthesis of unknown nonlinear polynomial systems via control barrier
  certificates,'' in \emph{Learning for Dynamics and Control Conference}, 2022,
  pp. 763--776.

\bibitem{samari2024single}
B.~Samari, O.~Akbarzadeh, M.~Zaker, and A.~Lavaei, ``From a single trajectory
  to safety controller synthesis of discrete-time nonlinear polynomial
  systems,'' \emph{IEEE Control Systems Letters}, vol.~8, pp. 3123--3128, 2024.

\bibitem{akbarzadeh2025formal}
O.~Akbarzadeh, B.~Samari, A.~Nejati, and A.~Lavaei, ``From formal methods to
  data-driven safety certificates of unknown large-scale networks,'' \emph{IEEE
  Transactions on Automatic Control}, 2025.

\bibitem{bisoffi2022controller}
A.~Bisoffi, C.~De~Persis, and P.~Tesi, ``Controller design for robust
  invariance from noisy data,'' \emph{IEEE Transactions on Automatic Control},
  vol.~68, no.~1, pp. 636--643, 2022.

\bibitem{BISOFFI20203953}
A.~Bisoffi, C.~D. Persis, and P.~Tesi, ``Data-based guarantees of set
  invariance properties,'' \emph{IFAC-PapersOnLine}, vol.~53, no.~2, pp.
  3953--3958, 2020.

\bibitem{LUPPI2024100914}
A.~Luppi, A.~Bisoffi, C.~{De Persis}, and P.~Tesi, ``Data-driven design of safe
  control for polynomial systems,'' \emph{European Journal of Control},
  vol.~75, 2024.

\bibitem{akbarzadeh2024learning}
O.~Akbarzadeh, M.~Ashoori, and A.~Lavaei, ``Learning robust safety controllers
  for uncertain input-affine polynomial systems,'' \emph{64th IEEE Conference
  on Decision and Control (CDC 2025) - arXiv:2412.03919v2}, 2025.

\bibitem{lavaei2022automated}
A.~Lavaei, S.~Soudjani, A.~Abate, and M.~Zamani, ``Automated verification and
  synthesis of stochastic hybrid systems: A survey,'' \emph{Automatica}, vol.
  146, 2022.

\bibitem{1967stochastic}
H.~J. Kushner, \emph{Stochastic Stability and Control}, ser. Mathematics in
  Science and Engineering.\hskip 1em plus 0.5em minus 0.4em\relax Elsevier
  Science, 1967.

\bibitem{anand2022small}
M.~Anand, A.~Lavaei, and M.~Zamani, ``From small-gain theory to compositional
  construction of barrier certificates for large-scale stochastic systems,''
  \emph{IEEE Transactions on Automatic Control}, vol.~67, no.~10, pp.
  5638--5645, 2022.

\bibitem{salamati2024data}
A.~Salamati, A.~Lavaei, S.~Soudjani, and M.~Zamani, ``Data-driven verification
  and synthesis of stochastic systems via barrier certificates,''
  \emph{Automatica}, vol. 159, 2024.

\bibitem{dai2023data}
X.~Dai, C.~De~Persis, N.~Monshizadeh, and P.~Tesi, ``Data-driven control of
  nonlinear systems from input-output data,'' in \emph{Proceedings of the 62nd
  IEEE Conference on Decision and Control (CDC)}, 2023, pp. 1613--1618.

\bibitem{tropp2015introduction}
J.~A. Tropp \emph{et~al.}, ``An introduction to matrix concentration
  inequalities,'' \emph{Foundations and Trends in Machine Learning}, vol.~8,
  no. 1-2, pp. 1--230, 2015.

\bibitem{prajna2004sostools}
S.~Prajna, A.~Papachristodoulou, P.~Seiler, and P.~A. Parrilo, ``{SOSTOOLS}:
  Control applications and new developments,'' in \emph{Proceedings of IEEE
  International Conference on Robotics and Automation}, 2004, pp. 315--320.

\bibitem{strogatz2018nonlinear}
S.~H. Strogatz, \emph{Nonlinear dynamics and chaos: With applications to
  physics, biology, chemistry, and engineering}, 2018.

\bibitem{khalil2002control}
H.~K. Khalil, \emph{Nonlinear systems}.\hskip 1em plus 0.5em minus 0.4em\relax
  Prentice hall Upper Saddle River, NJ, 2002.

\bibitem{hernandez2001chebyshev}
M.~Hern{\'a}ndez, ``Chebyshev's approximation algorithms and applications,''
  \emph{Computers \& Mathematics with Applications}, vol.~41, no. 3-4, pp.
  433--445, 2001.

\bibitem{billingsley1995probability}
P.~Billingsley, \emph{Probability and Measure}, 3rd~ed.\hskip 1em plus 0.5em
  minus 0.4em\relax John Wiley \& Sons, 1995.

\bibitem{petersen2008matrix}
K.~B. Petersen, M.~S. Pedersen \emph{et~al.}, ``The matrix cookbook,''
  \emph{Technical University of Denmark}, vol.~7, no.~15, 2008.

\bibitem{caverly2019lmi}
R.~J. Caverly and J.~R. Forbes, ``{LMI} properties and applications in systems,
  stability, and control theory,'' \emph{arXiv:1903.08599}, 2019.

\bibitem{boole1847mathematical}
G.~Boole, \emph{The mathematical analysis of logic}.\hskip 1em plus 0.5em minus
  0.4em\relax Philosophical Library, 1847.

\bibitem{9308978}
H.~J. van Waarde, M.~K. Camlibel, and M.~Mesbahi, ``From noisy data to feedback
  controllers: Nonconservative design via a matrix s-lemma,'' \emph{IEEE
  Transactions on Automatic Control}, vol.~67, no.~1, pp. 162--175, 2022.

\end{thebibliography}

\section{Appendix: Proof of Statements}\label{app: proofs}

{\bf Proof of Lemma~\ref{lemma:empirical-moment-bound}.}
Consider the non-negative random variable
\[
\Big\| \frac{1}{N} \sum_{i=1}^N \mathbb{Z}^i_{(\cdot)} \mathbb{Z}^{i\top}_{(\cdot)} - \mathbb{E}[\varsigma \varsigma^\top] \Big\|_F^2\!.
\]
By employing the Chebyshev's inequality~\cite{hernandez2001chebyshev} for matrix-valued random variables under the Frobenius norm as well as applying the Markov's inequality~\cite{billingsley1995probability}, we have
\begin{align}\notag
	\mathbb{P} &\Big( \Big\| \frac{1}{N} \sum_{i=1}^N \mathbb{Z}^i_{(\cdot)} \mathbb{Z}^{i\top}_{(\cdot)} - \mathbb{E}[\varsigma \varsigma^\top] \Big\|_F \geq \epsilon \Big)\\\notag
	&= \mathbb{P} \Big( \Big\| \frac{1}{N} \sum_{i=1}^N \mathbb{Z}^i_{(\cdot)} \mathbb{Z}^{i\top}_{(\cdot)} - \mathbb{E}[\varsigma \varsigma^\top] \Big\|_F^2 \geq \epsilon^2 \Big)\\\label{lll}
	&\leq
	\frac{1}{\epsilon^2} \mathbb{E} \Big[ \Big\| \frac{1}{N} \sum_{i=1}^N \mathbb{Z}^i_{(\cdot)} \mathbb{Z}^{i\top}_{(\cdot)} - \mathbb{E}[\varsigma \varsigma^\top] \Big\|_F^2 \Big]\!.
\end{align}
Now, since the samples \( \mathbb{Z}^i_{(\cdot)} \) are i.i.d. realizations of the random vector $\varsigma$, one has
\begin{align}\notag
	\mathbb{E} & \Big[ \Big\| \frac{1}{N} \sum_{i=1}^N \mathbb{Z}^i_{(\cdot)} \mathbb{Z}^{i\top}_{(\cdot)} - \mathbb{E}[\varsigma \varsigma^\top] \Big\|_F^2 \Big]\\\notag
	&=
	\frac{1}{N^2} \sum_{i=1}^N \mathbb{E} \left[ \left\| \mathbb{Z}^i_{(\cdot)} \mathbb{Z}^{i\top}_{(\cdot)} - \mathbb{E}[\varsigma \varsigma^\top] \right\|_F^2 \right]\\\label{llll}
	&=
	\frac{1}{N} \mathbb{E} \Big[ \Big\| \varsigma \varsigma^\top - \mathbb{E}[\varsigma \varsigma^\top] \Big\|_F^2 \Big]\!.
\end{align}
Combining~\eqref{lll} and \eqref{llll} yields
\begin{align}\notag
	\mathbb{P} &\Big( \Big\| \frac{1}{N} \sum_{i=1}^N \mathbb{Z}^i_{(\cdot)} \mathbb{Z}^{i\top}_{(\cdot)} - \mathbb{E}[\varsigma \varsigma^\top] \Big\|_F \geq \epsilon \Big)\\\notag
	&\leq\frac{1}{N \epsilon^2} \mathbb{E} \left[ \left\| \varsigma \varsigma^\top - \mathbb{E}[\varsigma \varsigma^\top] \right\|_F^2 \right]\!\!.
\end{align}
We now aim to compute an upper bound for the variance term
\[
\mathbb{E} \left[ \left\| \varsigma \varsigma^\top - \mathbb{E}[\varsigma \varsigma^\top] \right\|_F^2 \right]
= \mathbb{E} \Big[ \mathrm{Tr} \big( ( \varsigma \varsigma^\top - \mathbb{E}[\varsigma \varsigma^\top])^2 \big) \Big].
\]
By expanding the square inside the trace, one has
\begin{align}\notag
	\mathbb{E} &\Big[ \mathrm{Tr} \big(( \varsigma \varsigma^\top - \mathbb{E}[\varsigma \varsigma^\top])^2 \big) \Big]\\\notag
	& = \mathbb{E} \big[ \mathrm{Tr} \left( \varsigma \varsigma^\top \varsigma \varsigma^\top \right) \big]
	- 2 \, \mathrm{Tr} \left( \mathbb{E}[\varsigma \varsigma^\top]^2 \right)
	+ \mathrm{Tr} \left( \mathbb{E}[\varsigma \varsigma^\top]^2 \right) \\\label{hsg|}
	&= \mathbb{E}[\| \varsigma \|^4] - \left\| \mathbb{E}[\varsigma \varsigma^\top] \right\|_F^2.
\end{align}
To proceed with the proof, one can invoke the well-known moment identities, for Gaussian random variables~\cite{petersen2008matrix}, as
\begin{subequations}
	\begin{align}\notag
		\mathbb{E}[\| \varsigma \|^4] &= 2\| \Sigma \|_F^2 + (\operatorname{Tr}(\Sigma))^2 + 4 \mu^\top \Sigma \mu \\\label{fgsd2}
		&~~~+ 2\operatorname{Tr}(\Sigma) \| \mu \|^2 + \| \mu \|^4, \\\label{fgsd3}
		\left\| \mathbb{E}[\varsigma \varsigma^\top] \right\|_F^2 &= \left\| \Sigma + \mu \mu^\top \right\|_F^2 = \| \Sigma \|_F^2 + 2 \mu^\top \Sigma \mu + \| \mu \|^4.
	\end{align}
\end{subequations}
By substituting \eqref{fgsd2}-\eqref{fgsd3} into the variance expression in~\eqref{hsg|} and by utilizing the \emph{cyclic} property of the trace, we have
\begin{align*}
	&\mathbb{E} \left[ \left\| \varsigma \varsigma^\top \!\!-\! \mathbb{E}[\varsigma \varsigma^\top] \right\|_F^2 \right]
	= 2\| \Sigma \|_F^2 + (\operatorname{Tr}(\Sigma))^2 + 4 \mu^\top \Sigma \mu  \\
	&~~~+ 2\operatorname{Tr}(\Sigma) \| \mu \|^2 + \| \mu \|^4 - \left( \| \Sigma \|_F^2 + 2 \mu^\top \Sigma \mu + \| \mu \|^4 \right) \\
	&= \| \Sigma \|_F^2  + (\operatorname{Tr}(\Sigma))^2 + 2 \mu^\top \Sigma \mu + 2\operatorname{Tr}(\Sigma) \| \mu \|^2  \\
	&= \operatorname{Tr}(\Sigma^2) \!+\!  (\operatorname{Tr}(\Sigma))^2 \!+\!  2 \mu^\top \Sigma \mu \!+\!  2\operatorname{Tr}(\Sigma) \| \mu \|^2  \\
	&\leq \operatorname{Tr}(\Sigma^2) \!+\!  (\operatorname{Tr}(\Sigma))^2 \!+\!  2 \lambda_{\max}(\Sigma)\Vert \mu\Vert^2 \!+\!  2\operatorname{Tr}(\Sigma) \| \mu \|^2 \\
	&=  \operatorname{Tr}(\Sigma^2) \!+\!  (\operatorname{Tr}(\Sigma))^2 \!+\!  2 \lambda_{\max}(\Sigma)\operatorname{Tr}(\mu\mu^\top)\!+\!  2\operatorname{Tr}(\Sigma)\operatorname{Tr}(\mu\mu^\top\!).
\end{align*}
Given that  $\mu \mu^\top \preceq \Gamma_\mu$ and $\Sigma \preceq \Gamma_\Sigma$, it follows that $\operatorname{Tr}(\mu\mu^\top) \leq \operatorname{Tr}(\Gamma_\mu)$, $\lambda_{\max}(\Sigma) \leq \lambda_{\max}(\Gamma_\Sigma)$, and $\operatorname{Tr}(\Sigma^2) \leq \operatorname{Tr}(\Gamma_\Sigma^2)$. Then one has
\begin{align}\notag
	\mathbb{E} &\left[ \left\| \varsigma \varsigma^\top - \mathbb{E}[\varsigma \varsigma^\top] \right\|_F^2 \right]\leq \operatorname{Tr}(\Gamma_\Sigma^2) + (\operatorname{Tr}(\Gamma_\Sigma))^2  \\\label{bvs}
	&+ 2 \lambda_{\max}(\Gamma_\Sigma)\operatorname{Tr}(\Gamma_\mu) + 2\operatorname{Tr}(\Gamma_\Sigma)\operatorname{Tr}(\Gamma_\mu).
\end{align}
Combining \eqref{lll}, \eqref{llll}, and \eqref{bvs}, we obtain
\begin{align}\notag
	\mathbb{P} &\Big( \Big\| \frac{1}{N} \sum_{i=1}^N \mathbb{Z}^i_{(\cdot)} \mathbb{Z}^{i\top}_{(\cdot)} - \mathbb{E}[\varsigma \varsigma^\top] \Big\|_F \geq \epsilon \Big)\\\notag
	&\leq
	\frac{1}{N\epsilon^2} \big(  \operatorname{Tr}(\Gamma_\Sigma^2) + (\operatorname{Tr}(\Gamma_\Sigma))^2 \\
	&~~~+ 2 \lambda_{\max}(\Gamma_\Sigma)\operatorname{Tr}(\Gamma_\mu) + 2\operatorname{Tr}(\Gamma_\Sigma)\operatorname{Tr}(\Gamma_\mu)\big).
\end{align}
Taking the complement gives the result as
\begin{align*}
	\mathbb{P}& \Big( 
	\Big\| \frac{1}{N} \sum_{i=1}^N \mathbb{Z}_{(\cdot)}^i \mathbb{Z}_{(\cdot)}^{i\top} 
	- \mathbb{E}[\varsigma \varsigma^\top] \Big\|_F
	< \epsilon 
	\Big)\geq
	1 - {\bar\beta_2},
\end{align*}
with 
\begin{align*}
	{\bar\beta_2} =& \frac{1}{N\epsilon^2} \big(  \operatorname{Tr}(\Gamma_\Sigma^2) + (\operatorname{Tr}(\Gamma_\Sigma))^2 + 2 \lambda_{\max}(\Gamma_\Sigma)\operatorname{Tr}(\Gamma_\mu) \\
	& + 2\operatorname{Tr}(\Gamma_\Sigma)\operatorname{Tr}(\Gamma_\mu)\big),
\end{align*}
which completes the proof. \hfill $\blacksquare$

{\bf Proof of Lemma~\ref{Lem1}.}
Let $\varsigma = [\varsigma_1; \varsigma_2; \dots; \varsigma_n]$. It is clear that
\begin{align}\label{nks1}
	\mathbb{E}\Big[\varsigma\varsigma^\top\Big] = \operatorname{Cov}(\varsigma) + \mathbb{E}\Big[\varsigma\Big]\mathbb{E}\Big[\varsigma^\top\Big]\!.
\end{align}
Given that the covariance of $\varsigma$, \emph{i.e. $\operatorname{Cov}(\varsigma)$}, is $\Sigma$  and its mean is $\mu$, and since  $\mu \mu^\top \preceq \Gamma_\mu$ and $\Sigma \preceq\Gamma_\Sigma$, one has
\begin{align}\label{New17}
	\mathbb{E}\Big[\varsigma\varsigma^\top\Big] = \Sigma + \mu\mu^\top \preceq\Gamma_\Sigma+ \Gamma_\mu.
\end{align}
Given the results of Lemma~\ref{lemma:empirical-moment-bound} and since $\Vert \cdot\Vert \leq \Vert \cdot \Vert_F$, the inequality
\begin{align}\label{xxx}
	\Big\| &\frac{1}{N} \sum_{i=1}^N \mathbb{Z}^i_{(\cdot)} \mathbb{Z}^{i\top}_{(\cdot)} - \mathbb{E}[\varsigma \varsigma^\top] \Big\| \\\notag
	&\leq \Big\| \frac{1}{N} \sum_{i=1}^N \mathbb{Z}^i_{(\cdot)} \mathbb{Z}^{i\top}_{(\cdot)} - \mathbb{E}[\varsigma \varsigma^\top] \Big\|_F 
	< \epsilon 
\end{align}
holds true with a confidence of at least $1 - \bar\beta_2$.
Since both $\frac{1}{N} \sum_{i=1}^N \mathbb{Z}^i_{(\cdot)} \mathbb{Z}^{i\top}_{(\cdot)}$ and $\mathbb{E}[\varsigma \varsigma^\top]$ are symmetric and the bound in ~\eqref{xxx} is expressed in the spectral norm, one has
\begin{align}\label{New651}
	\frac{1}{N} \sum_{i=1}^N \mathbb{Z}^i _{(\cdot)}\mathbb{Z}^{i\top}_{(\cdot)} - \epsilon \mathbb I_n \prec	\mathbb{E}\Big[\varsigma\varsigma^\top\Big]
\end{align}
which holds true with a confidence of at least $1 - \bar\beta_2$.
Given inequalities \eqref{New17} and \eqref{New651}, we have
\begin{align}\label{New65}
	\frac{1}{N} \sum_{i=1}^N \mathbb{Z}^i_{(\cdot)} \mathbb{Z}^{i\top}_{(\cdot)} \preceq \Gamma_\Sigma + \Gamma_\mu +\epsilon\mathds I_{n}.
\end{align} 
Since inequality~\eqref{New17} follows deterministically from Assumption~\ref{Assum2}, it holds for every realization. Moreover, according to Lemma~\ref{lemma:empirical-moment-bound}, inequality~\eqref{New651} holds with confidence at least $1-\bar\beta_2$. Therefore, inequality~\eqref{New65} also holds with confidence at least $1-\bar\beta_2$, which completes the proof.  \hfill $\blacksquare$

{\bf Proof of Theorem~\ref{T_final}.}
	We first show that condition \eqref{Eq_CBC3_theorem}  ensures the satisfaction of condition \eqref{Eq_CBC3}. Since $\mathcal{B}(x) = x^\top P x$, we have
	\begin{align} \notag
		&\mathbb{E}\Big[\mathcal{B}(A \mathcal{F}(x) + B{\mathcal{G}}(x)u + \varsigma) \,\,\big|\,\, x, u\Big]\\\notag &\stackrel{(\ref{Eq_CLs})}{=}\,\mathbb{E}\Big[\big(\Phi \Lambda(x)x + \varsigma\big)^\top P\big(\Phi \Lambda(x)x + \varsigma\big)\,\,\big|\,\, x, u\Big]\\\notag
		& \,\,= \big[\Phi \Lambda(x)x\big]^\top \!P \big[\Phi \Lambda(x)x\big] \!+\! 2\underbrace{\big[\Phi \Lambda(x)x\big]^\top \sqrt{P}}_{a}\underbrace{\sqrt{P}\mathbb{E}\big[\varsigma\,\,\big|\,\, x, u\big]}_{b}\\\label{Eq_barriernext} 
		&~~~ + \mathbb{E}\big[\varsigma^\top P \varsigma\,\,\big|\,\, x, u\big].  
	\end{align}
	According to the Cauchy-Schwarz inequality, \emph{i.e.,}  $a b \leq \vert a \vert \vert b \vert,$ for any $a^\top, b \in \R^{n}$, followed by
	employing Young's inequality \cite{caverly2019lmi}, \emph{i.e.,} $\vert a \vert \vert b \vert \leq \frac{\rho}{2} \vert a \vert^2 + \frac{1}{2\rho} \vert b \vert^2$, for any $\rho \in \mathbb{R}^+$, one has
	\begin{align}\notag
		\mathbb{E}&\Big[\mathcal{B}(A \mathcal{F}(x) + B{\mathcal{G}}(x)u + \varsigma) \,\,\big|\,\, x, u\Big]\\\notag
		&~ \leq (1+\rho)\big[\Phi \Lambda(x)x\big]^\top P \big[\Phi \Lambda(x)x\big] \\ \label{Eq_nextBB}
		&~~~+ \frac{1}{\rho} \, \mathbb{E}\big[\varsigma \mid x,u\big]^\top P \, \mathbb{E}\big[\varsigma \mid x,u\big] + \mathbb{E}\big[\varsigma^\top P \varsigma\,\,\big|\,\, x, u\big]. 
	\end{align}
	Given that $\varsigma^\top P \varsigma$ is scalar, and by employing the \emph{cyclic} property of the trace, one has
	\[
	\mathbb{E}\Big[\varsigma^\top P \varsigma\Big] \!=\! \mathbb{E}\Big[\operatorname{Tr}(\varsigma^\top P \varsigma)\Big] \!=\! \mathbb{E}\Big[\operatorname{Tr}(P \varsigma \varsigma^\top)\Big] \!=\! \operatorname{Tr}(P \mathbb{E}\Big[\varsigma \varsigma^\top\Big]).
	\]
	Given that $\mathbb{E}\Big[\varsigma\varsigma^\top\Big] = \Sigma + \mu\mu^\top$ according to \eqref{New17}, we have
	\begin{align*}
		\mathbb{E}\Big[\varsigma^\top P \varsigma\Big] \!=\! \operatorname{Tr}\big(P (\Sigma + \mu\mu^\top)\big) = \operatorname{Tr}(P \Sigma) + \operatorname{Tr}\big(P \mu\mu^\top).
	\end{align*}
	Since \( P\succ 0 \) and  $\mu \mu^\top \preceq \Gamma_\mu$, $\Sigma \preceq \Gamma_\Sigma$, one has
	\begin{align}\label{New87}
		\mathbb{E}\Big[\varsigma^\top P \varsigma\Big] \leq \operatorname{Tr}(P\Gamma_\Sigma) + \operatorname{Tr}\big(P\Gamma_\mu).
	\end{align}
	Similarly, one has 
	\begin{align}\notag
		\mathbb{E}\big[\varsigma \big]^\top P \, \mathbb{E}\big[\varsigma \big] &= \operatorname{Tr}(\mathbb{E}\big[\varsigma\big]^\top P \, \mathbb{E}\big[\varsigma\big])= \operatorname{Tr}(P \, \mathbb{E}\big[\varsigma\big]\mathbb{E}\big[\varsigma \big]^\top) \\\label{nhsd}
		&= \operatorname{Tr}(P \mu \mu ^\top)\!\leq\! \operatorname{Tr}(P \Gamma_\mu).
	\end{align}
	By applying \eqref{New87} and \eqref{nhsd} to \eqref{Eq_nextBB}, one has 
	\begin{align}\notag
		\mathbb{E}&\Big[\mathcal{B}(A \mathcal{F}(x) + B{\mathcal{G}}(x)u + \varsigma) \,\,\big|\,\, x, u\Big]\\\notag
		&\leq (1+\rho)\big[\Phi \Lambda(x)x\big]^\top P \big[\Phi \Lambda(x)x\big]\\\label{Eq_nextBB1}
		&~~~ + \underbrace{(1+\rho^{-1})\operatorname{Tr}(P \Gamma_\mu)+ \text{Tr}(P\Gamma_\Sigma)}_\psi. 
	\end{align}
	By subtracting $\kappa\mathcal{B}(x)$ from both sides of \eqref{Eq_nextBB1}, we have
	\begin{align} \notag
		\mathbb{E}&\Big[\mathcal{B}(A \mathcal{F}(x) + B{\mathcal{G}}(x)u + \varsigma) \,\,\big|\,\, x, u\Big] - \kappa\mathcal{B}(x)\\\notag
		&\leq  (1+\rho)\big[\Phi \Lambda(x)x\big]^\top P \big[\Phi \Lambda(x)x\big] - \kappa\mathcal{B}(x) + \psi. 
	\end{align}
	By defining 
	\begin{align}\notag  
		\mathbb L(x) := (1+\rho)\big[\Phi \Lambda(x)x\big]^\top P \big[\Phi \Lambda(x)x\big] - \kappa \overbrace{x^\top P x}^{\mathcal B(x)},
	\end{align}
	it is evident that if $\mathbb L(x)\leq 0$, then 
	\begin{align} \notag
		\mathbb{E}\Big[\mathcal{B}(A \mathcal{F}(x) + B{\mathcal{G}}(x)u + \varsigma) \,\,\big|\,\, x, u\Big] - \kappa\mathcal{B}(x) \leq \psi.
	\end{align}
	Our attention now turns to ensuring that the constraint $\mathbb{L}(x) \leq 0$ is met. The expression $\mathbb{L}(x)$ can be expanded as 
	\begin{align*}
		\mathbb L(x) =\,& (1+\rho)x^\top \Lambda(x)^\top \Phi^\top  P\Phi \Lambda(x)x - \kappa x^\top P x \notag \\
		=\,& x^\top \big((1+\rho)\Lambda(x)^\top \Phi ^\top P\Phi \Lambda(x)-\kappa P\big) x. 
	\end{align*}
	To enforce $\mathbb L(x)\leq 0$, it is sufficient to satisfy
	\begin{align}\label{new09}
		(1+\rho)\Lambda(x)^\top \Phi ^\top P\Phi \Lambda(x)- \kappa P \preceq 0. 
	\end{align}
	Using Schur complement, \eqref{new09} is equivalent to
	\begin{align} 
		(1+\rho)\Phi \Lambda(x) P^{-1}\Lambda(x)^\top \Phi ^\top- \kappa P^{-1} \preceq 0, \notag
	\end{align}
	which could be recast in the following quadratic form:
	\begin{align}\label{Eq_CBC3_new}
		\mathcal{Q}^{S\text{-}CBC}_{\Phi}(x):=\thetaTildeI^\top \mathcal R^{S\text{-}CBC}(x) \thetaTildeI \preceq 0,
	\end{align}
	with
	\begin{align*}
		&\mathcal R^{S\text{-}CBC}(x) = \\
		&\begin{bmatrix} -\kappa P^{-1} & 0  \notag\\ * & (1+\rho) \underbrace{\left[\begin{array}{c} \mathcal{J}(x) \notag\\
					\mathcal{G}(x)\mathcal K(x)\end{array}\right]}_{\Lambda(x)} P^{-1}\underbrace{\left[\begin{array}{c} \mathcal{J}(x) \notag\\
					\mathcal{G}(x)\mathcal K(x)\end{array}\right]^\top}_{\Lambda(x)^\top} \end{bmatrix}\!\!.
	\end{align*}
	{\bf Construction of the data-conformity constraints.} While $\Phi$ in~\eqref{Eq_CBC3_new} is unknown, Lemma~\ref{Lem1} offers $T$ quadratic matrix inequalities that characterize its behavior through data-conformity constraints. In particular, the collected data in \eqref{Eq_ST}, for $j= 1,\dots,T,$ implies that
	\begin{align}\notag
		\Xright_j^i  =& A\mathbb{F}_j^i  +\!B{\mathbb{G}}_j^i  +\mathbb{Z}_j^i   = \Phi  \mathbb{H}_j^i  +\mathbb{Z}_j^i , \\\label{New43}
		&\,\, \text{where }\,\, \Phi = [A~~~B], \quad\mathbb{H}_j^i   \!=\! \left[\begin{array}{c} \mathbb{F}_j^i   \\
			{\mathbb{G}}_j^i   \end{array}\right]\!\!. 
	\end{align}
	Subsequently, $\mathbb{Z}_j^i    \!=\! \Xright_j^i   \!-\! \Phi  \mathbb{H}_j^i  $. 
	According to equation (\ref{Eq_DC}), for \emph{each fixed} \( j \in \{1, \dots, T\} \), we define the event
	\[
	\mathcal{E}_j := \Big\{
	\frac{1}{N} \sum_{i=1}^N \mathbb{Z}^i_{j} \mathbb{Z}^{i\top}_{j} 
	\preceq \Gamma_\Sigma + \Gamma_\mu + \epsilon \mathds{I}_n
	\Big\}.
	\]
	Under the results of Lemma \ref{Lem1}, \( \mathbb{P}(\mathcal{E}_j) \geq 1 - \bar\beta_2 \) for each \( j \).  
	We are interested in the joint event \( \bigcap_{j=1}^T \mathcal{E}_j \).  
	Using the Boole's inequality \cite{boole1847mathematical}, we obtain
	\[
	\mathbb{P} \Big( \bigcup_{j=1}^T \mathcal{\bar E}_j \Big) 
	\leq \sum_{j=1}^T \mathbb{P}(\mathcal{\bar E}_j) 
	\leq T \bar\beta_2,
	\]
	with $\mathcal{\bar E}_j$ being the complement of $\mathcal{E}_j$. Therefore, the probability that all \( T \) events occur simultaneously is at least
	\begin{align}\label{ooo}
		\mathbb{P} \Big( \bigcap_{j=1}^T \mathcal{E}_j \Big) 
		= 1 - \mathbb{P} \Big( \bigcup_{j=1}^T \mathcal{\bar E}_j \Big) 
		\geq 1 - \overbrace{T \bar\beta_2}^{\beta_2}.
	\end{align}
	According to the closed-form data-based representation of $ \mathbb{Z}^i_j $ in \eqref{New43} as  $\mathbb{Z}_j^i = \Xright_j^i  -  \Phi  \mathbb{H}_j^i $, and probabilistic inequality (\ref{ooo}), one has
	\begin{align}
		\Gamma_\Sigma&+ \Gamma_\mu +\epsilon\mathds I_{n} \succeq  \frac{1}{N} \sum_{i=1}^N \mathbb{Z}^i_j \mathbb{Z}^{i\top}_j \nonumber \\ 
		=&\, \frac{1}{N} \sum_{i=1}^N(\Xright^i_j  \!-\! \Phi  \mathbb{H}^i_j )(\Xright^i_j \! -\! \Phi  \mathbb{H}^i_j )^\top   \nonumber\\
		=&\, \frac{1}{N} \sum_{i=1}^N\big(\Phi  \mathbb{H}^i_j  \mathbb{H}_j ^{i\top} \Phi^\top \!\!-\! \Phi  \mathbb{H}^i_j \Xright_j^{i\top} \!\!-\!\Xright^i_j  \mathbb{H}_j^{i\top}  \Phi^\top \!\!+\! \Xright^i_j  \Xright_j^{i\top}\big), \label{Eq_DC111}
	\end{align}
	which yields $T$ matrix inequalities for $\Phi$, holding true with the confidence of at least $1 - \beta_2$. Accordingly, \eqref{Eq_DC111} can be recast as
	\begin{gather} \label{Eq_DC_new}
		\mathcal{Q}_{\Phi}^{DC_j} :=\thetaTildeI^\top \mathcal R^{DC_j} \thetaTildeI \preceq 0,
	\end{gather}
	with $\mathcal R^{DC_j}$ as in \eqref{Eq_PI_new1}.
	
	\noindent {\bf S-lemma-based synthesis condition.}
	By applying S-lemma \cite{9308978}, to enforce \eqref{Eq_CBC3_new} where \eqref{Eq_DC_new} is fulfilled, it is sufficient to show that there exists $\alpha_{j=1,\dots,T}(x):\mathbb{R}^n \to  \mathbb{R}_0^+$ such that
	\begin{align} \label{Eq_Slemma} 
		& \mathcal R^{S\text{-}CBC}(x) - \sum_{j=1}^{T}\alpha_j(x)\mathcal R^{DC_j} \preceq 0.
	\end{align}
	As there is a bilinear coupling between $P^{-1}$ and $\mathcal{K}(x)$ in~\eqref{Eq_CBC3_new} within the term $\Lambda(x) P^{-1} \Lambda(x)^\top$, this interaction can be reformulated using a dilation technique based on the Schur complement \cite{caverly2019lmi}, allowing us to express inequality~\eqref{Eq_Slemma} equivalently as \eqref{Eq_CBC3_theorem}, \emph{i.e.,}
	\begin{align*}
		\eqref{Eq_Slemma}\Leftrightarrow   \eqref{Eq_CBC3_theorem}.
	\end{align*}
	Since \eqref{Eq_CBC3_theorem} holds, condition \eqref{Eq_Slemma} is automatically satisfied. By selecting appropriate scalar values for $\rho > 0$ and $\kappa \in (0,1)$, the expression in \eqref{Eq_CBC3_theorem} becomes a linear matrix inequality (LMI) in the design variables $\bar P$, $\bar{\mathcal{K}}(x)$, and $\alpha_{j=1,\dots,T}(x)$. 
	
	\noindent  {\bf Satisfaction of the initial and unsafe set conditions.} We proceed with showing that satisfying conditions \eqref{Eq_CBC1_theorem} and \eqref{Eq_CBC2_theorem} implies the fulfillment of conditions \eqref{Eq_CBC1} and \eqref{Eq_CBC2}, respectively.
	Given that $P\succ 0$ and $\eta, \delta \in \mathbb{R}^+,$ by leveraging Schur complement \cite{caverly2019lmi}, one can verify that
	\begin{align*}
		\eta - x^\top Px \geq  0~&\Leftrightarrow ~ P^{-1} - \eta^{-1}xx^\top \succeq 0,
	\end{align*}
	which is equivalent to \eqref{Eq_CBC1}. Similarly, according to Schur complement, one has 
	\begin{align*}
		\delta - x^\top Px >  0~&\Leftrightarrow ~ P^{-1} - \delta^{-1}xx^\top \succ0,
	\end{align*}
	demonstrating that their complements are also equivalent:
	\begin{align}\label{New76}
		\delta - x^\top Px \ngtr  0~&\Leftrightarrow ~ P^{-1} - \delta^{-1}xx^\top \nsucceq 0.
	\end{align}
	Observe that the left-hand side of \eqref{New76} matches the expression in \eqref{Eq_CBC2}. As a result, it is necessary to ensure that $P^{-1} - \delta^{-1} xx^\top \nsucceq 0$. Since the matrix inequality $\nsucceq 0$ indicates that the matrix is either negative semi-definite or indefinite, with the latter being difficult to certify, a conservative relaxation using `$\preceq$' is adopted. Together with the definitions of $X_\eta$ and $X_\delta$ in \eqref{Eq_CBC1_theorem} and \eqref{Eq_CBC2_theorem}, this ensures that satisfying \eqref{Eq_CBC1_theorem} and \eqref{Eq_CBC2_theorem} is sufficient to meet the original conditions \eqref{Eq_CBC1} and \eqref{Eq_CBC2}.
	
	\noindent {\bf Probabilistic guarantee.} As the final step of the proof, we demonstrate that the theorem's conclusion holds with a confidence level of $1 - \beta_2$. By defining events
	\begin{align*}
		\mathcal{E}_1 = \big\{\text{inequality}~ \eqref{Eq_CBC3_new}\big\},\quad \mathcal{E}_2 = \big\{ \text{inequality}~\eqref{Eq_DC_new}\big\},
	\end{align*}
	one has $\PP\{\mathcal{\bar E}_1\}=0$ since $\mathcal{E}_1$ is a deterministic inequality and holds true
	and  $\PP\{\mathcal{\bar E}_2\}\leq \beta_2,$ where $\mathcal{\bar E}_1$ and $\mathcal{\bar E}_2$ are the complement of $\mathcal{E}_1$ and $\mathcal{E}_2$, respectively. The goal is to characterize the joint probability that both  events $\mathcal E_1$ and $\mathcal{E}_2$ hold simultaneously:
	\begin{align*}
		\PP(\mathcal E_1\cap \mathcal{E}_2)=1-\PP\big(\mathcal{\bar E}_1\cup \mathcal{\bar E}_2\big).
	\end{align*}
	Since $\PP\big(\mathcal{\bar E}_1\cup \mathcal{\bar E}_2\big)\leq\PP(\mathcal{\bar E}_1)+\PP(\mathcal{\bar E}_2)$,
	one has 
	\begin{align}\label{Eq:18}
		&\PP(\mathcal E_1\cap \mathcal{E}_2)\geq 1-\PP\big(\mathcal{\bar E}_1\big)-\PP\big(\mathcal{\bar E}_2\big)
		\geq 1-\beta_2.
	\end{align}
	This establishes that inequalities \eqref{Eq_CBC3_new} and \eqref{Eq_DC_new}, and consequently condition~\eqref{Eq_Slemma}, hold with a confidence level of at least $1 - \beta_2$, thereby concluding the proof. \hfill $\blacksquare$

\section{Appendix: Simulation results}\label{app: simulations}

\subsection{Lorenz System}

{\bf Dynamics.}
\begin{equation*}
	\Upsilon\!: \begin{cases}
		{{x}}_1^+=x_1 + \tau(10x_2 -10x_1) + \varsigma_1 \\
		{{x}}_2^+=x_2 + \tau(28x_1 - x_2- x_1 x_2+u)  + \varsigma_2\\
		{{x}}_3^+=x_3 + \tau(x_1x_3 - \frac{8}{3}x_3)  + \varsigma_3
	\end{cases}
\end{equation*}
{\bf Extended dictionary.} $\mathcal{F}(x) = [ x_1; x_2; x_3; x_1x_2; x_1x_3; x_2x_3; x_1^2; x_2^2; x_3^2 ]$\vspace{0.2cm}

{\bf  Prescribed regions of interest.}
$X = [-10, 10]^3$, $X_\eta = [0, 1.5] \times [-1.5, 1.5]^2$, $X_\delta = [-10, -6]^3 \cup [6, 10]^3$\vspace{0.2cm}

{\bf Designed S-CBC matrix $P$.}
\begin{equation*}
	P = \begin{bmatrix}
		33450.3 & 3339.99 & 656.04\\ *& 38333 & 1417.8\\ * & * & 34290.3 
	\end{bmatrix}
\end{equation*}

{\bf Designed controller.}
\begin{align}\notag
	u &= 
	-1.4389\,x_{1}^2 - 1.4844\,x_{1}x_{2} - 0.50617\,x_{1}x_{3}\\\label{sdhg}
	&\quad + 1.7463\,x_{2}^2 + 0.62559\,x_{2}x_{3} + 0.1802\,x_{3}^2\\\notag
	&\quad - 1.9406\,x_{1} - 28.1245\,x_{2} + 0.1565\,x_{3}
\end{align}

{\bf Designed level sets.}
$\eta = 2.72 \times 10^{5}$, $\delta =  4.02 \times 10^{6}$\vspace{0.2cm}

{\bf Probabilistic safety guarantee.}
\begin{align*}
	\PP\Big\{\PP\big\{\Upsilon\vDash \mathbb{S}\big\}\geq 0.92\Big\}\geq 0.995
\end{align*}

\subsection{Chen System}

{\bf Dynamics.}
\begin{align*}
	\Upsilon\!: \begin{cases}
		x^+_{1}= x_1 + \tau\,(35\, x_{2}-35\, x_{1}) +\varsigma_1\\
		x^+_{2} = x_2 + \tau\,(-7\, x_{1}+28\,x_{2}-x_{1} x_{3} +u) +\varsigma_2\\
		x^+_{3} = x_3 + \tau\,(x_{1} x_{2}-3\, x_{3}) +\varsigma_3
	\end{cases}
\end{align*}

{\bf Extended dictionary.} $\mathcal{F}(x) = [ x_1; x_2; x_3; x_1x_2; x_1x_3; x_2x_3]$\vspace{0.2cm}

{\bf  Prescribed regions of interest.} $X = [-10, 10]^3$, $X_\eta = [-1, 1]^3$, $X_\delta = [-10, -6]^3 \cup [6, 10]^3$\vspace{0.2cm}

{\bf Designed S-CBC matrix $P$.} \begin{equation*}
	P = \begin{bmatrix}
		65587.8 & 20628.6 & 486.86\\ * & 103532 & 984.94\\ * & * & 60375.3 
	\end{bmatrix}
\end{equation*}

{\bf Designed controller.} 
\begin{align}\notag
	u &= 
	0.016422\,x_{1}^2 - 0.12604\,x_{1}x_{2} - 0.48379\,x_{1}x_{3}\\\label{hdjfh}
	&\quad + 0.0019506\,x_{2}^2 + 2.2790\,x_{2}x_{3} + 0.030116\,x_{3}^2\\\notag
	&\quad + 4.9195\,x_{1} - 32.3368\,x_{2} + 0.38473\,x_{3}
\end{align}

{\bf Designed level sets.} $\eta = 2.67 \times 10^{5}$, $\delta = 8.31 \times 10^{6}$\vspace{0.2cm}

{\bf Probabilistic safety guarantee.}
\begin{align*}
	\PP\Big\{\PP\big\{\Upsilon\vDash \mathbb{S}\big\}\geq 0.95 \Big\}\geq 1 - 35\!\times\! 10^{-5}
\end{align*}

\subsection{Spacecraft System}

{\bf Dynamics.}
\begin{equation*}
	\Upsilon\!: \begin{cases}
		x^+_{1} = x_1 + \tau\, (\frac{J_{2} - J_{3}}{J_{1}}\, x_{2}\, x_{3} + \frac{1}{J_{1}}\, u_{1}) +\varsigma_1 \\
		x^+_{2} = x_2 + \tau \,(\frac{J_{3} - J_{1}}{J_{2}}\,x_{1}\, x_{3} + \frac{1}{J_{2}}\, u_{2}) + \varsigma_2\\
		x^+_{3} = x_3 + \tau \, (\frac{J_{1} - J_{2}}{J_{3}}\, x_{1}\, x_{2} + \frac{1}{J_{3}}\, u_{3}) + \varsigma_3
	\end{cases}
\end{equation*}
{\bf Extended Dictionary.}  $\mathcal{F}(x) =[x_{1};x_{2};x_{3};x_{1}x_{2};x_{1}x_{3};x_{2}x_{3}; x^2_{1}]$\vspace{0.2cm}

{\bf  Prescribed regions of interest.}
$X = [-10, 10]^3$, $X_\eta = [-1, 1]^3$, $X_\delta = [-10, -6]^3 \cup [6, 10]^3$\vspace{0.2cm}

{\bf Designed S-CBC matrix $P$.}
\begin{equation*}
	P = \begin{bmatrix}
		2021150 & -18.70 & 548.94\\ * & 1102220 & 162.13\\ * & * & 5534420
	\end{bmatrix}
\end{equation*}

{\bf Designed controller.}
\begin{align*}
	u_1 &= 
	-0.54439\,x_{1}^2 + 0.0041722\,x_{1}x_{2} + 0.026666\,x_{1}x_{3}\\
	&\quad + 9.1059 \!\times\! 10^{-6}\,x_{2}^2 \!-\! 0.0043456\,x_{2}x_{3} \!+\! 5.7216 \!\times\! 10^{-6}\,x_{3}^2\\
	&\quad + 0.14816\,x_{1} + 0.11343\,x_{2} - 0.0014198\,x_{3}\\[1em]
	u_2 &= 
	-0.31636\,x_{1}^2 + 0.011794\,x_{1}x_{2} + 0.017499\,x_{1}x_{3}\\
	&\quad + 6.4658 \!\times\! 10^{-7}\,x_{2}^2 \!-\! 0.0026077\,x_{2}x_{3} \!+\! 3.6263 \!\times\! 10^{-6}\,x_{3}^2\\
	&\quad + 0.025553\,x_{1} + 0.037923\,x_{2} + 0.0001751\,x_{3}\\[1em]
	u_3 &= 
	0.31662\,x_{1}^2 - 0.012773\,x_{1}x_{2} - 0.016738\,x_{1}x_{3}\\
	&\quad + 1.7739 \!\times\! 10^{-6}\,x_{2}^2 \!+\! 0.0024458\,x_{2}x_{3} \!-\! 3.9601 \!\times\! 10^{-6}\,x_{3}^2\\
	&\quad - 0.017964\,x_{1} - 0.053823\,x_{2} - 0.00033808\,x_{3}
\end{align*}

{\bf Designed level sets.}
$\eta =  1.66 \times 10^{7}$, $\delta = 2.98 \times 10^{8}$\vspace{0.2cm}

{\bf Probabilistic safety guarantee.}
\begin{align*}
	\PP\Big\{\PP\big\{\Upsilon\vDash \mathbb{S}\big\}\geq 0.93\Big\}\geq 0.96
\end{align*}

\begin{IEEEbiography}[{\includegraphics[width=1in,height=1.25in,clip,keepaspectratio]{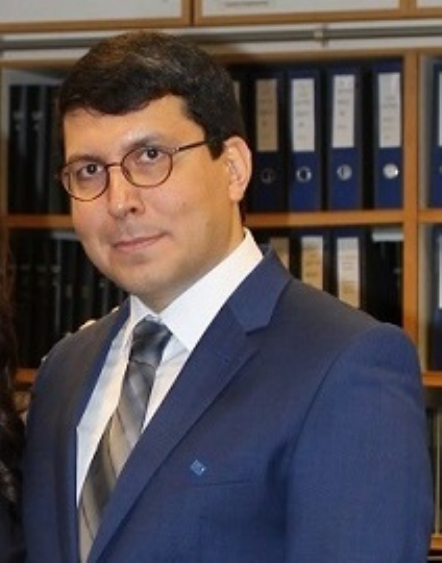}}]{Abolfazl Lavaei}~(M'17--SM'22) is an Assistant Professor in the School of Computing at Newcastle University, United Kingdom. Between January 2021 and July 2022, he was a Postdoctoral Associate in the Institute for Dynamic Systems and Control at ETH Zurich, Switzerland. He was also a Postdoctoral Researcher in the Department of Computer Science at LMU Munich, Germany, between November 2019 and January 2021. He received the Ph.D. degree in Electrical Engineering from the Technical University of Munich (TUM), Germany, in 2019. He obtained the M.Sc. degree in Aerospace Engineering with specialization in Flight Dynamics and Control from the University of Tehran (UT), Iran, in 2014. He is the recipient of several international awards in the acknowledgment of his work including  Best Repeatability Prize (Finalist) at the ACM HSCC 2025, IFAC ADHS 2024, and IFAC ADHS 2021, HSCC Best Demo/Poster Awards 2022 and 2020, IFAC Young Author Award Finalist 2019, and Best Graduate Student Award 2014 at University of Tehran with the full GPA (20/20). His line of research primarily focuses on the intersection of Control Theory, Formal Methods, and Statistical Learning Theory.
\end{IEEEbiography}

\end{document}